 \numberwithin{equation}{section}
\newcommand*\circled[1]{\tikz[baseline=(char.base)]{\node[shape=circle,draw,inner sep=2pt] (char) {#1};}}
\newcommand\op[1]{\mathop{\rm #1}\nolimits}
\newcommand\p{\partial}
\newcommand\R{{\mathbb R}}
\newcommand{\weg}[1]{}
\newcommand*\owedge{\mathpalette\@owedge\relax}
\newcommand*\@owedge[1]{%
  \mathbin{%
    \ooalign{%
      $#1\m@th\bigcirc$\cr
      \hidewidth$#1\m@th\wedge$\hidewidth\cr
    }%
  }%
}
\theoremstyle{theorem}
\newtheorem{theorem}{Theorem}
\theoremstyle{definition}
\newtheorem{definition}[theorem]{Definition}
\theoremstyle{proposition}
\newtheorem{proposition}[theorem]{Proposition}
\theoremstyle{corollary}
\newtheorem{corollary}[theorem]{Corollary}
\title{Killing Tensors in Koutras--McIntosh Spacetimes}
\author{Boris Kruglikov and Wijnand Steneker}
 \address{Department of Mathematics and Statistics,
UiT the Arctic University of Norway, 9037 Troms\o, Norway.\newline
Emails: {\tt boris.kruglikov@uit.no} \ \& {\tt wijnand.steneker@gmail.com}}
\begin{document}

 \begin{abstract}
The Koutras--McIntosh family of metrics include conformally flat pp-waves and the Wils metric. 
It appeared in a paper of 1996 by Koutras--McIntosh as an example of a pure radiation 
spacetime without scalar curvature invariants or infinitesimal symmetries.
Here we demonstrate that these metrics have no ``hidden symmetries'', by which we mean
Killing tensors of low degrees. For the particular case of 
Wils metrics we show the nonexistence of Killing tensors up to degree 6.

The technique we use is the geometric theory of overdetermined PDEs and
the Cartan prolongation-projection method. Application of those allows to prove
the nonexistence of polynomial in momenta integrals for the equation of geodesics
in a mathematical rigorous way. 
Using the same technique we can completely classify all lower degree
Killing tensors and, in particular, prove that for generic pp-waves 
all Killing tensors of degree 3 and 4 are reducible. 
 \end{abstract}

\maketitle

\vspace{-0.5cm}

\section{Introduction}\label{S1}

\subsection{Formulation and motivation}\label{S11}

Polynomial integrals of Hamiltonian ODEs were actively studied in the XIX$^\text{th}$
century classical mechanics; 
in particular the existence of quadratic integral for the metric of the ellipsoid 
allowed Jacobi in 1836 to find an explicit formula for geodesics in terms of elliptic
functions.

This problem also appeared in general relativity: the famous metrics of Schwarzschild, Gödel and Kerr admit
polynomial integrals allowing to describe geodesics of the corresponding spacetimes in detail. 
Often integrals are conserved quantities related to Killing vectors via Noether's 
theorem, but sometimes there are higher degree integrals, known as Killing tensors. 
One of those is the Carter constant \cite{C,WP} reducing the geodesic motion to quadratures.

There exist obstructions to the existence of polynomial integrals:
according to \cite{KM2} a generic metric $g$ admits no such integrals even locally. 
It is thus important to realize the existence/nonexistence of Killing tensors for concrete metrics
from applications, see \cite{HM,KT,Ki,KM1,V}.

The following is the Koutras--McIntosh family of spacetimes for $(a,b)\neq(0,0)$:
 \begin{equation}\label{KMm}
g = 2(ax+b)\,du\,dw -2aw\,dx\,du +\bigl(f(u)(ax+b)(x^2+y^2)-a^2w^2\bigr)\,du^2 -dx^2 -dy^2.
 \end{equation}
These metrics were shown in \cite{KoMc} to possess neither invariants nor symmetries. 
The first property means that all polynomial curvature invariants, i.e., complete contractions of tensor products 
of the Riemann tensor and its covariant derivatives $\nabla_{i_1}\!\cdots\nabla_{i_s}R_{abcd}$, 
vanish and so cannot be used to distinguished $g$ from the Minkowski metric. 

These are so-called VSI (vanishing scalar invariants) spaces that received considerable attention in recent time \cite{PPCM}. 
They belong to a more general class of spacetimes not separated by their scalar curvature invariants \cite{CHP,CHPP}, 
which in dimension 4 were proven to be of degenerate Kundt type. Note that Kundt spaces can be distinguished by their
Cartan \cite{MMC} or differential \cite{KS} invariants, see \cite{KMS} for a comparisson. 

The second property above means there are no Killing vectors, or linear integrals, for 
\eqref{KMm}. 
In this paper we show that it also does not possess ``hidden symmetries'', by 
which we mean Killing tensors of low degrees. 

Note that the nonexistence of Killing tensors is important in several applications. For instance, it is
necessary for linearization stability of Einstein’s equations \cite{AM} and also for the inverse problem
in tensor tomography \cite{PSU}. Thus, even though Killing tensors do not have direct geometric interpretation
(as noticed by Penrose and Walker \cite{WP}, see however \cite{CGH}) 
their existence or nonexistence carries certain dynamical implications.

\subsection{Main results}\label{S12}

Metric \eqref{KMm} is conformally flat (but nonflat for $f\neq0$) 
and describes pure radiation, satisfying Einstein’s field equations of the type
$R_{ab}=\phi\, l_al_b$ for a null vector field $l$ and a scalar field $\phi$. 

Metric \eqref{KMm} for $a=0,b=1$ is a pp-wave, possessing 6 Killing vectors and 1 homothety 
except for special cases $f(u)=c$ and $f(u)=c/u^2$,
where the number of Killing vectors increases to 7 \cite{SG,KT} and the homothety persists. 
We will examine the existence of higher order
Killing tensors (up to degree 4) and for specific cases $f(u)=cu^m$, $m=0,1,2,-2$ 
we prove that there is only one such irreducible quadratic tensor.

Metric \eqref{KMm} for $a=1,b=0$ defines the Wils spacetime \cite{W}. This metric is known to have 
no Killing vectors or homotheties for general functional parameter $f(u)$, 
so we examine it for the existence of higher order Killing tensors. 
It turns out that up to order 6 no irreducible Killing tensors exist (that is with the exception of powers of
the Hamiltonian and combinations with Killing vectors when they exist).
These results are presented in Section \ref{S3}. 

For the general Koutras--McIntosh family we deduce the following statement:

 \begin{theorem} \label{Th1}
For generic numerical parameters $a,b$ and functional parameter $f(u)$ the spacetime \eqref{KMm}
possesses no Killing tensors up to degree 6 except for energy and its powers $H$, $H^2$ and $H^3$.
 \end{theorem}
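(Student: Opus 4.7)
The plan is to reformulate the Killing tensor equation as an overdetermined linear PDE system and to reduce the generic case of Theorem~\ref{Th1} to the Wils case already treated in Section~\ref{S3}. A Killing tensor of degree $d$ on a pseudo-Riemannian manifold is equivalent to a polynomial first integral $P$ of degree $d$ in momenta for the geodesic flow, i.e.\ a function on $T^*M$ satisfying $\{H, P\} = 0$ with $H = \tfrac12 g^{ij} p_i p_j$ the geodesic Hamiltonian; reducibility means $P$ lies in the polynomial algebra generated by $H$ and any Killing vectors. For the metric \eqref{KMm} on coordinates $(u, w, x, y)$ this becomes a first-order linear overdetermined system on the $\binom{d+3}{3}$ base functions that are the components of the symmetric tensor $K$.

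The key reduction is that \emph{generic} $(a, b)$ in particular requires $a \neq 0$: the affine shift $x \mapsto x - b/a$ absorbs the constant term in $ax+b$, and after rescaling the metric \eqref{KMm} becomes the Wils form $(a, b) = (1, 0)$. Since Killing tensors and reducibility are diffeomorphism invariants, Theorem~\ref{Th1} in this regime follows verbatim from the Wils statement of Section~\ref{S3}, asserting that up to degree~$6$ the only Killing tensors are $1, H, H^2, H^3$. The case $a = 0$ (the pp-wave branch, where additional Killing vectors produce reducible higher-degree invariants) is explicitly excluded by the genericity hypothesis on $(a, b)$.

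The substantive content therefore lies in Section~\ref{S3}, where the Cartan prolongation-projection method is applied to the Wils metric. Schematically: write a generic degree-$d$ polynomial $P$ in the momenta with unknown coefficient functions on the base, substitute into $\{H, P\} = 0$, collect powers of the $p_i$ to obtain the zeroth prolongation, and then iteratively differentiate and project to finite jets until the system closes of finite type. The compatibility conditions accumulate as differential polynomials in $f(u)$, and \emph{generic} $f$ means $f$ outside their zero locus, so all solutions collapse onto linear combinations of $H, H^2, H^3$. The main obstacle is the computational explosion: a degree-$6$ Killing tensor on a four-manifold has $84$ unknown component functions satisfying $120$ scalar PDEs, and the prolongation tower inflates the jet space rapidly. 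The analysis is tractable only with computer algebra together with a graded weighting adapted to the null direction $\partial_w$, in which the top-symbol block is solved first (forcing $P$ to be a polynomial in the principal symbol of $H$) and the lower-weight corrections descend to genuine powers of $H$.
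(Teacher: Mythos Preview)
Your central reduction is incorrect. The translation $x\mapsto x-b/a$ does turn the factor $ax+b$ into $a\tilde x$, but the quadratic form $x^2+y^2$ is \emph{not} translation-invariant: in the new variable it becomes $(\tilde x-b/a)^2+y^2$. Hence the $du^2$-coefficient transforms to
\[
f(u)\,a\tilde x\bigl((\tilde x-b/a)^2+y^2\bigr)-a^2w^2
= f(u)\bigl(a\tilde x^{\,3}-2b\tilde x^{\,2}+(b^2/a)\tilde x+a\tilde x y^2\bigr)-a^2w^2,
\]
which is not of the Wils form $f(u)\,\tilde x(\tilde x^{\,2}+y^2)-w^2$ for $b\neq0$, and no further rescaling removes the $\tilde x^{\,2}$ and $\tilde x$ terms. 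So the Koutras--McIntosh metrics with $a\neq0$, $b\neq0$ are \emph{not} isometric to Wils metrics by this (or any obvious) coordinate change, and your deduction of Theorem~\ref{Th1} from the Wils case collapses.

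The paper's argument is different and does not attempt such a reduction. It uses upper semi-continuity of $\delta_k=\mathrm{columns}(M_k)-\mathrm{rank}(M_k)$ in the parameters $(a,b,j^{k+1}f)$: the prolongation matrix $M_k(x_0)$ has entries that depend polynomially on these parameters, so the locus where $\mathrm{rank}(M_k)$ drops is closed. For the \emph{specific} Wils choices $a=1$, $b=0$, $f(u)=u^m$ with $m=1,2$, Table~2 shows $\delta_k$ attains the minimal conceivable values (0 for odd $d$, 1 for even $d$, accounting exactly for the powers of $H$). Since these values are minimal they are also the generic values across the full $(a,b,f)$-family, which is Theorem~\ref{Th1}. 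Your final paragraph gestures at prolongation--projection but adds ingredients (a graded weighting, a top-symbol block argument) that are not what the paper does; the actual mechanism is exact integer rank computation at a rational point followed by this semi-continuity step.
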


Here and below genericity of $f(u)$  is understood in $C^{k+1}$ topology, where $k$ is the prolongation
level determined by Algorithm 1 of \S\ref{S24}, where the matrix $M_k$ depends on the jet $j^{k+1}f$.
Table 2 shows values of $k$ for degrees $d\leq6$.

We can be more specific on the exceptional values of the involved parameters. To find those that allow Killing vectors 
one may follow the general approach with metric invariants via the Cartan-Karlhede algorithm 
\cite{Kar}, however our 
method with counting compatibility conditions via the coefficient matrix of the prolonged PDE 
system gives an alternative and implies the following results.

 \begin{theorem} \label{Th2} 
Metrics \eqref{KMm} possess Killing vectors if and only if either $a=0$ (then rescale $b\to1$), 
so that the spacetime is a plane wave, or $b=0$ (then rescale $a\to1$), so that $g$ is Wils metric 
with $f(u)=(c_0+c_1u+c_2u^2)^{-2}$.
 \end{theorem}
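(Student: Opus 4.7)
The plan is to analyze the first-order Killing system $\nabla_{(\alpha}X_{\beta)}=0$ for \eqref{KMm} by the Cartan prolongation-projection method, splitting the argument according to whether $a=0$ or $a\neq 0$.

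If $a=0$, then $b\neq 0$ by the standing hypothesis; rescaling $b\to 1$ reduces \eqref{KMm} to the standard pp-wave $g=2\,du\,dw+f(u)(x^2+y^2)\,du^2-dx^2-dy^2$, whose $6$-dimensional isometry algebra is classical (see \cite{SG,KT}). This gives the ``if'' direction of the first alternative for every $f$.

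For $a\neq 0$, I rescale $a\to 1$. Writing $X=X^u\partial_u+X^w\partial_w+X^x\partial_x+X^y\partial_y$ and using the identity $\nabla_\alpha\nabla_\beta X_\gamma=R_{\beta\gamma\alpha}{}^\delta X_\delta$ valid for any Killing field, the Killing equations close to a Frobenius-type system on the $10$ unknowns $(X^\alpha,\nabla_\beta X^\alpha)$. Because the metric depends polynomially on $(w,x,y)$ with coefficients depending on $u$ only through $f$, the mixed-partial compatibility conditions split by monomial in $(w,x,y)$ into a finite linear system whose coefficient matrix $M=M(b;j^{k}f)$ controls the space of Killing vectors; this is exactly the $d=1$ instance of Algorithm~1 of \S\ref{S24}. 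In the subregime $b\neq 0$, I would extract from the monomials carrying the factors $a^2w^2$ and $aw$ (appearing only in $g_{uu}$ and $g_{ux}$) algebraic relations forcing successively $X^w$, then $X^x$ and $X^y$, and finally $X^u$ to vanish, independently of $f$. This would establish the nonexistence of Killing vectors throughout the regime $ab\neq 0$.

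In the remaining branch $b=0$ (Wils), the same reduction will collapse $M$ to a single scalar compatibility condition on $f(u)$. I expect this condition to be the third-order ODE $(f^{-1/2})'''=0$, whose general solution is precisely $f(u)=(c_0+c_1u+c_2u^2)^{-2}$. For such $f$, back-substitution into the prolonged system should produce one explicit Killing vector, yielding the ``if'' direction of the second alternative. The main obstacle is in this $b=0$ branch: the prolonged system is sizeable and its compatibility locus must be identified \emph{exactly} — not merely up to a nonzero factor — so that one recovers the ODE $(f^{-1/2})'''=0$ rather than a consequence of it. As elsewhere in the paper, a computer-algebra implementation of the prolongation algorithm is essential for the bookkeeping, and the resulting exceptional Killing vector must then be exhibited by explicit integration.
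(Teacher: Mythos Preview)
Your overall strategy---prolongation-projection on the $d=1$ Killing system, with the case split $a=0$ / $a\neq0$---is exactly the paper's approach, and your ODE $(f^{-1/2})'''=0$ is a clean equivalent of the paper's compatibility condition $4f^2f'''=18ff'f''-15(f')^3$ derived for the Wils branch in Theorem~\ref{WW}. The paper organizes the computation slightly differently: rather than splitting on $b$ in advance, it runs Algorithm~1 symbolically on the full $(a,b,f)$-family at prolongation level $6$ (a $1260\times840$ matrix), performs Gauss elimination while recording the pivot factors, and reads off the dichotomy $a=0$ or $b=0$ from where those factors can vanish; the $b=0$ branch is then finished by the separate Wils analysis.

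The one genuine gap in your proposal is the $ab\neq0$ subcase. You propose to kill $X^w,X^x,X^y,X^u$ ``independently of $f$'' by isolating the monomials in $w$ coming from the $-w^2\,du^2$ and $-2w\,dx\,du$ terms. But those very terms are present verbatim when $b=0$, where a Killing vector \emph{does} exist for the special $f$; hence the relations you extract from the $w$-monomials alone cannot force everything to zero without further input from the $b$- and $f$-dependent pieces of the metric. In other words, the mechanism that distinguishes $b\neq0$ from $b=0$ is not visible at the level of the $w$-monomials you single out, and the paper offers no such shortcut: the splitting $ab=0$ is obtained only from the full symbolic elimination. Your plan is sound once that step is replaced by the actual computation (or by a more careful hand argument that genuinely exploits $b\neq0$ through the $(x+b)$ factors in $g_{uw}$ and $g_{uu}$).
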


The same approach but with much heavier computations yields the following results.

 \begin{theorem} \label{Th3} 
Metrics \eqref{KMm} possess Killing 2-tensors different from the Hamiltonian $H$ in the same range of parameters
as for the Killing vectors, i.e., either $a=0$ or $b=0$, $f(u)=(c_0+c_1u+c_2u^2)^{-2}$.
 \end{theorem}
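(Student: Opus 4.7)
The ``if'' direction is immediate: whenever a Killing vector $X$ exists (by Theorem~\ref{Th2} this happens precisely in the parameter range listed), the symmetric product $X\odot X$ is a Killing 2-tensor distinct from the Hamiltonian. The real content of Theorem~\ref{Th3} is the converse, that no further exceptional loci in the parameter space $(a,b,f)$ arise at the 2-tensor level. My plan is to establish this in parallel with the proof of Theorem~\ref{Th2}, using the Cartan prolongation-projection method. A symmetric contravariant 2-tensor $K^{ij}$ on the 4-dimensional spacetime has 10 independent components, and the Killing equation $\nabla^{(i}K^{jk)}=0$ supplies 20 first-order linear PDEs; the first step is to write them in explicit coordinates $(u,w,x,y)$ for the metric \eqref{KMm}, splitting them according to the direction of the leading derivative.

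Then I would prolong step by step: differentiate, substitute back to eliminate top-order jets where possible, and extract the algebraic compatibility conditions at each stage. At level $k$ the system takes the form $M_k\cdot j^kK=0$, where $M_k$ depends polynomially on $j^{k+1}f(u)$, on the coordinates, and on $(a,b)$. The space of Killing 2-tensors is the stabilised kernel. In the generic branch $ab\neq 0$ I expect the corank of $M_k$ to equal exactly one at some bounded prolongation order, the kernel being spanned by $g^{-1}$, which as a function on $T^{\ast}M$ is precisely $H$. The case $a=0$ (rescale $b\to 1$) reduces to the pp-wave analysis already carried out in Section~\ref{S3}. For the Wils branch $b=0$ (rescale $a\to 1$), additional solutions appear precisely when one specific minor of $M_k$ vanishes; that minor factors through a linear ODE in $f$ whose integration yields $f(u)=(c_0+c_1u+c_2u^2)^{-2}$, matching Theorem~\ref{Th2}.

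The main obstacle is computational complexity: the 2-tensor system has roughly an order of magnitude more unknowns than the Killing vector system and requires higher prolongation order before $\op{rank}M_k$ stabilises, so each step must be handled by careful symbolic computation, with variable orderings adapted to the null structure of \eqref{KMm}. A subtler difficulty is the bookkeeping of differential versus algebraic constraints: one must separate conditions that restrict $f(u)$ from conditions that restrict the components of $K$, and verify that no new exceptional branch in $(a,b,f)$ emerges beyond the two already identified in Theorem~\ref{Th2}. Once the matrix ranks are under control and the degenerate locus for $f$ is correctly isolated, direct inspection of $\ker M_k$ in each of the three branches yields the stated characterisation.
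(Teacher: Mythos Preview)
Your proposal is correct and follows essentially the same prolongation--projection strategy as the paper, which carries out the symbolic rank analysis of the matrix $M_k$ at level $\mathcal{E}^{(7)}$ (size $4200\times3300$) via sparse Gauss elimination, storing the pivot factors whose vanishing isolates the exceptional loci $a=0$ or $b=0$, and then invokes the Wils analysis. One small correction: the compatibility condition on $f$ that emerges in the Wils branch is not a linear ODE but the third-order nonlinear equation \eqref{eqn_ODE_Wils}, whose general solution is nonetheless the stated rational family (equivalently, $f^{-1/2}$ satisfies a linear equation).
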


The proofs and further specifications will be given in Section \ref{S3}. 
The Maple \& LinBox worksheets, which demonstrate our computations, 
can be found in a supplement to the arXiv version of this paper.

\section{Geometric Theory of PDEs}\label{S2}

We start with the general setup.
Let $(N^n,g)$ be a pseudo-Riemannian manifold. In this section we formalize searching for Killing tensors 
 (or polynomial integrals of the geodesic flow on the tangent bundle but we work on the cotangent
 bundle using raising/lowering indices with the metric $g$)
via a compatibility analysis of an overdetermined PDE system and discuss the prolongation-projection technique.

\subsection{Hamiltonian formalism}\label{S21}

The energy function $H=\frac12\|p\|^2_g$ writes in local coordinates
 \begin{equation*}
H(x,p) = \frac12 g^{ij}(x) p_i p_j \hspace{1cm} [g^{ij}]=[g_{ij}]^{-1}.
 \end{equation*}
It is well-known that geodesics of $g$ are projections to the base $N$ of trajectories 
of the corresponding Hamiltonian vector field $X_H=\omega^{-1}dH$ on $T^*N\stackrel{g}\simeq TN$,
where $\omega$ is the canonical symplectic form on the cotangent bundle.

Integrating the equations of geodesics requires conserved quantities for this Hamiltonian system.
A function $I:T^*N\to\R$ is an \textit{integral} (of motion) $X_H(I)=0$ if it Poisson commutes 
with the Hamiltonian: 
 \begin{equation*}
\{H,I\}=\sum_{i=1}^n \left(\frac{\p H}{\p p_i}\frac{\p I}{\p x^i}-\frac{\p H}{\p x^i}\frac{\p I}{\p p_i}\right)=0.
 \end{equation*}

The natural action of the isometry group on the cotangent bundle $T^{*}N$ is 
Hamiltonian
and it preserves the energy $H$. Thus, the isometries represent infinitesimal symmetries
of the geodesic flow given, by virtue of Noether's theorem, by \textit{linear in 
momenta} integrals of motion. 
Explicitly, if $X = X^i(x)\p_{x^i}\in\mathfrak{iso}(N,g)$ is a Killing vector field 
then 
the corresponding integral is $I(x,p) =\langle X,p\rangle= X^i(x) p_i$. 

More generally, a \textit{Killing tensor} of degree $d$ corresponds to a 
\textit{homogeneous in momenta} polynomial
 \begin{equation}\label{Id}
I_d:= a^{i_1\cdots i_d}(x)\ p_{i_1}\cdots p_{i_d},
 \end{equation}
which Poisson commutes with $H$, and is thus a polynomial integral. Since the Hamiltonian is quadratic 
in momenta, for any \eqref{Id} the Poisson bracket $\{H,I_{d}\}$ is of degree $d+1$ in momenta. 
Consequently, Killing $d$-tensors correspond to solutions of a system of differential equations formed
by vanishing of $p$-coefficients of the Poisson bracket, which we call the \textit{Killing equation},
 \begin{equation}\label{geodesicflowPDE}
\mathcal{E}_d:= \{F = 0 : F \in \text{coeffs}_p(\{H, I_d\}) \}.
 \end{equation}
This is an overdetermined system of linear first order PDEs on the coefficients $a^{i_1\cdots i_d}(x)$ of the Killing tensor. 
Actually, there are ${n+d\choose d+1}$ equations on ${n+d-1\choose d}$ unknown functions.
Denote solutions to this system -- the linear space of all Killing $d$-tensors -- by $K_d$.

\subsection{Jet spaces and equations}\label{S22}

The notion of jet-space formalizes the computational device of truncated
Taylor polynomials; we refer for details to \cite{KL}.
If $x^i$ are local coordinates on $N$ then the jet-space $J^kN$ of $k$-jet of functions $u:N\to\R$ 
has local coordinates $(x^i,u_\sigma)$ for multi-indices $\sigma=(i_1,\dots,i_n)$,
$i_s\ge0$, $|\sigma|=\sum i_s\leq k$. Similarly are defined jets of vector-valued functions, sections, etc.
For a bundle $\pi:E\to N$ the jet-space of its sections is denoted by $J^k(N,E)$.

The space of $k$-jets of maps $u:\R^n\to\R^m$ will be simply denoted by $J^k(n,m)$.
It is a bundle of rank $m\cdot{n+k-1\choose k}$ over $n$-dimensional base.
Any map $u=(u^j):\R^n\to\R^m$ lifts to the jet-section
$j^ku:\R^n\to J^k(n,m)$ given by $x^i\mapsto u^j_\sigma=\partial u^j(x)/\p x^\sigma$.

 \begin{definition}[\textbf{Geometric PDE}]
A partial differential equation of order $k$ is 
a submanifold $\mathcal{E} \subseteq J^k(n,m)$. A 
\textit{solution} of the PDE is defined to be a function $u:\R^n\to\R^m$
such that its $k$-jet $j^ku$ takes values in $\mathcal{E}$.
A local solution is the same but defined on a domain $U\subset\R^n$.
We denote by $\op{Sol}(\mathcal{E})$ the space of all (local) solutions of 
$\mathcal{E}$.
 \end{definition} 

Elements of a $k$'th order geometric PDE $\mathcal{E} \subseteq 
J^k(n,m)$ are solutions \textit{up to order} $k$ (at a point).
To find the solutions of the PDE $\mathcal{E}$ 
up to order $k+1$ and higher, we have to differentiate the defining equations. 
To encode the chain rule, we define the $q$'th \textit{total derivative} of $F:J^k\to\R^s$
to be a vector-function on $J^{k+1}$ given by 
 \begin{equation}\label{eqn: total derivative}
D_qF:= \frac{\p F}{\p x^q}+ \sum_{j=1}^m \sum_{|\sigma|\leq k} 
\frac{\p F}{\p u^j_\sigma}\cdot u^j_{\sigma+1_q}.
 \end{equation}
(Here we use the notation $\sigma + 1_q$ for the multi-index obtained by adding 1
to the $q$'th entry of $\sigma$.) 
Now, a point $(x^i,u^j_\sigma)\in J^{k+1}$ is said to be a solution of $\mathcal{E}$ 
up to order $k+1$ if it satisfies the following system of equations:
 \begin{equation*}
\mathcal{E}^{(1)} := \Bigl\{ F(x^i,u^j_\sigma)= 0,\ (D_qF)(x^i,u^j_\alpha)= 0\ 
\forall\ q=1,\dots, n\Bigr\}.
 \end{equation*}
The resulting system of equations is called the \textit{first prolongation} of $\mathcal{E}$. 
By construction, a solution of the prolongation $\mathcal{E}^{(1)}$ is still a solution of $\mathcal{E}$. 
We inductively define the $l$'th prolongation by $\mathcal{E}^{(l)} = (\mathcal{E}^{(l-1)})^{(1)}\subset J^{k+l}$.
It corresponds to solutions \textit{up to order} $k+l$ (at a point).

 \begin{definition}[\textbf{Finite Type}]
A PDE $\mathcal{E}\subseteq J^k(n,m)$ is called of \textit{finite type} $l$ if after $l$ prolongations all 
the highest order derivatives of the dependent variables can be expressed algebraically in terms
of the lower order derivatives. A PDE is called of \textit{Frobenius type} if it is of finite type $0$.
 \end{definition}

Given a PDE $\mathcal{E}$ of finite type, it is readily seen that the space of formal 
solutions $\mathcal{E}^{(\infty)}$ is necessarily finite-dimensional. This implies that
 (under some regularity conditions) 
the solution space $\text{Sol}(\mathcal{E})$ is finite-dimensional. 

The Killing PDE is represented by a first order system $\mathcal{E}_d\subset J^1(N,S^dTN)$.
The following fundamental result is well-known, cf. \cite{To} and \cite{Wo}.

 \begin{theorem}[\textbf{Killing PDE is of Finite Type}]\label{thm:killing_finite_type}
The PDE $\mathcal{E}_d$ defining a Killing $d$-tensor is a first order linear 
PDE of finite type $d$ with $\op{Sol}(\mathcal{E}_d)=K_d$. This equation and its prolongations possess 
no compatibility conditions before achieving Frobenius type.
 \end{theorem}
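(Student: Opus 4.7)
I would prove the substantive content of the theorem---finite type $d$ and absence of premature compatibility conditions---by a Spencer-style symbol and prolongation analysis; the remaining claims are tautological, since $\mathcal{E}_d$ is first-order linear by its construction via the $p$-coefficients of $\{H,I_d\}$, and $\op{Sol}(\mathcal{E}_d)=K_d$ is just the definition of a Killing tensor.

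Coordinate-freely, $\mathcal{E}_d$ reads $\mathrm{Sym}(\nabla a)=0$ with $a\in\Gamma(S^dT^*N)$ and $\mathrm{Sym}:T^*N\otimes S^dT^*N\to S^{d+1}T^*N$ the full symmetrization. Its symbol is $g_1=\ker\mathrm{Sym}$, the hook-type $\mathrm{GL}(n)$-irreducible $\mathbb{S}_{(d,1)}T^*N$. The $k$-th Spencer prolongation is
\[
g_{k+1}=(S^{k+1}T^*N\otimes S^dT^*N)\cap(T^*N\otimes g_k),
\]
and the first goal is to show $g_{d+1}=0$. I would establish this by induction on $d$: the base case $d=1$ reduces to $(S^2T^*N\otimes T^*N)\cap(T^*N\otimes\wedge^2T^*N)$, killed by the standard chain
\[
T_{ijk}=T_{jik}=-T_{jki}=-T_{kji}=T_{kij}=T_{ikj}=-T_{ijk};
\]
the inductive step combines the same symmetric-vs.-antisymmetric clash with the previous hypothesis applied to a free ``slot'' in the $d$-tensor.

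For the absence of compatibility conditions before the Frobenius level, I would verify vanishing of the Spencer cohomology $H^{k,2}(g_1)=0$ for all $k<d$. This is the classical 2-acyclicity of the Killing symbol (cf.\ \cite{To,Wo}): concretely, at every prolongation step the commutators $[\nabla_i,\nabla_j]$ applied to prolonged variables produce only curvature-linear corrections to the next-highest derivatives and never descend to new equations on already-determined lower-order data, so each projection $\mathcal{E}_d^{(k)}\twoheadrightarrow\mathcal{E}_d^{(k-1)}$ remains surjective up through $k=d$.

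\textbf{Main obstacle.} The hard step is the Spencer 2-acyclicity: while $g_{d+1}=0$ is a direct Young-tableau computation, verifying that no curvature contribution descends prematurely at an intermediate prolongation requires tracking the Schur-functor decompositions of $S^{k+1}T^*N\otimes g_1$ level-by-level. Rather than reproving this combinatorics in full generality I would invoke the classical literature; a self-contained sanity check for small $d$ ($d=1,2$) is straightforward and confirms the pattern.
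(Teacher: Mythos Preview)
The paper does not actually prove this theorem: it is stated as a well-known fact with citations to \cite{To} and \cite{Wo}, and no argument is given in the text. Your proposal therefore goes further than the paper itself by sketching the standard Spencer-symbol route that underlies those references. That route is correct in outline: the symbol of $\mathcal{E}_d$ is the hook module $\mathbb{S}_{(d,1)}T^*N=\ker(\mathrm{Sym}\colon T^*N\otimes S^dT^*N\to S^{d+1}T^*N)$, its successive prolongations are the two-row Schur functors, the $(d{+}1)$-st prolongation vanishes (hence finite type $d$), and $2$-acyclicity of the Spencer sequence $H^{k,2}(g_1)=0$ for $0\le k\le d$ forces each projection $\mathcal{E}_d^{(k)}\to\mathcal{E}_d^{(k-1)}$ to be surjective until Frobenius level. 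Since you ultimately invoke the same classical literature the paper cites, the two ``proofs'' converge.

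One small clarification is worth making. Your heuristic that curvature commutators ``never descend to new equations on already-determined lower-order data'' conflates two distinct steps. The $2$-acyclicity statement $H^{k,2}(g_1)=0$ is a purely representation-theoretic fact about the $\mathrm{GL}(n)$-module $g_1$ and is independent of the metric or its curvature; it says that the \emph{symbol} map on each prolongation is surjective. The curvature then enters only as the torsion (Weyl) class of the prolongation, and $2$-acyclicity guarantees this class can always be absorbed into the next symbol level rather than projecting down as a genuine compatibility condition. Phrasing it this way separates the Young-tableau combinatorics you allude to from the differential-geometric content, and makes clear why the result holds for \emph{every} pseudo-Riemannian $(N,g)$ uniformly.
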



\subsection{Prolongation-projection}\label{S23}

Let $\mathcal{E}=\{ F(x^i,u^j_\alpha)=0\}\subseteq J^k(n,m)$ be a PDE of order $k$. 
Its solution up to order $k$ can be extended to order $(k+l)$ if and only if it belongs 
to the projection of the prolongation $\pi_{k+l,k}(\mathcal{E}^{(l)})\subseteq \mathcal{E}$. 
In the case of equality here, every $k$-jet solution can be extended to a $(k+l)$-jet solution. 
In the opposite case, there is a linear combination of iterated total derivatives up to order $l$,
$\Box(F)=\sum_{|\tau|\leq l} a^\tau D_\tau F$, which has order $k$.

 \begin{definition}[\textbf{Compatibility}]
A \textit{compatibility condition} of $\mathcal{E}$ is an equation defining 
$\pi_{k+l,k}(\mathcal{E})$ that is algebraically independent of $F$ and 
that is satisfied by all formal solutions.
 \end{definition}

Associated to a PDE is the Cartan distribution. Solutions arise as integral manifolds of this distribution, cf.\ \cite{KL}. 
Therefore, in the PDE setting, Frobenius theorem implies:

 \begin{theorem}[\textbf{Frobenius Theorem}]\label{thm_frobenius2}
Solutions of a PDE $\mathcal{E} \subseteq J^k(n,m)$ of finite type $l$ are 
determined uniquely by their $(k+l-1)$-jets. If in addition $\mathcal{E}$ 
has no compatibility conditions, then for every $\xi\in\mathcal{E}^{(l)}$ there exists a local solution 
$u\in\op{Sol}(\mathcal{E})$ satisfying $j_x^{k+l}u=\xi$. 
 \end{theorem}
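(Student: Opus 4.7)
The plan is to reduce the statement to the classical Frobenius theorem for involutive distributions, applied to the Cartan distribution restricted to $\mathcal{E}^{(l)} \subseteq J^{k+l}(n,m)$. By the finite-type assumption, on $\mathcal{E}^{(l)}$ every coordinate $u^j_\sigma$ with $|\sigma|=k+l$ is expressible as a smooth function of the coordinates $(x^i, u^j_\tau)_{|\tau| \leq k+l-1}$. Consequently, after a suitable shrinking, $\mathcal{E}^{(l)}$ is (the graph of a section over) a submanifold of $J^{k+l-1}(n,m)$, and a point $\xi \in \mathcal{E}^{(l)}$ is completely determined by its projection to $J^{k+l-1}(n,m)$, i.e., by the base point $x$ together with a $(k+l-1)$-jet at $x$.

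Next I would analyze the Cartan distribution $\mathcal{C}$ on $J^{k+l}$ pulled back to $\mathcal{E}^{(l)}$. A Cartan lift of $\partial_{x^q}$ is $D_q = \partial_{x^q} + \sum u^j_{\tau+1_q}\partial_{u^j_\tau}$, and since the top-order jets on $\mathcal{E}^{(l)}$ are functions of the lower-order ones, the restriction of each $D_q$ to $\mathcal{E}^{(l)}$ is well defined and the restricted distribution $\mathcal{C}|_{\mathcal{E}^{(l)}}$ has rank exactly $n$. By construction, jet lifts $j^{k+l}u$ of solutions $u \in \op{Sol}(\mathcal{E})$ are precisely the $n$-dimensional integral manifolds of $\mathcal{C}|_{\mathcal{E}^{(l)}}$ that are horizontal for the projection to the base. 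Since such an integral manifold, if it exists, is uniquely determined by any of its points, and since a point of $\mathcal{E}^{(l)}$ is in turn determined by its $(k+l-1)$-jet, uniqueness of the solution with prescribed $(k+l-1)$-jet follows.

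For the existence statement under the no-compatibility hypothesis, the task is to verify that $\mathcal{C}|_{\mathcal{E}^{(l)}}$ is involutive, whence classical Frobenius provides the required $n$-dimensional integral manifold through every $\xi \in \mathcal{E}^{(l)}$, which is automatically the jet lift of some local solution $u$ with $j^{k+l}_x u = \xi$. The key algebraic identity is $[D_p, D_q] = 0$ on $J^\infty$, so on $\mathcal{E}^{(l)}$ the commutators vanish modulo the vertical fields tangent to the fibers of $\pi_{k+l, k+l-1}$; translating this into constraints on the finite-type substitution for top-order jets shows that the obstruction to involutivity is precisely an order-$(k+l-1)$ relation produced by a linear combination of iterated total derivatives $\Box(F)$ of the defining functions, i.e., a compatibility condition in the sense of the definition. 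The absence of such conditions thus gives involutivity.

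The main obstacle is this last identification: carefully tracking how the bracket $[D_p, D_q]$ restricted to $\mathcal{E}^{(l)}$ acquires a nonzero component along the vertical direction exactly when prolonging $\mathcal{E}$ one step further produces an equation of order $\leq k+l-1$ that is not already a consequence of the defining equations of $\mathcal{E}^{(l-1)}$. Once this equivalence between involutivity and the absence of compatibility conditions is established, the rest of the argument is a direct application of the smooth Frobenius theorem, and the smoothness (or analyticity) of the solution is inherited from that of the integral manifold and of the finite-type substitution.
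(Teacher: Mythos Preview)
Your approach is essentially what the paper indicates: the paper does not give a detailed proof but simply remarks, just before the theorem statement, that solutions arise as integral manifolds of the Cartan distribution and that the classical Frobenius theorem therefore yields the stated result. Your proposal fills in exactly this sketch---restricting the Cartan distribution to $\mathcal{E}^{(l)}$, observing it has rank $n$ by the finite-type hypothesis, and identifying the absence of compatibility conditions with involutivity---so your argument and the paper's are the same in spirit, with yours being considerably more explicit.
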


Note that if a PDE is of finite type, then all of its prolongations are finite type as well. There is the following 
more general claim, which holds also true in infinite type case, under the assumption of analyticity of the equation.

 \begin{theorem}[\textbf{Cartan's Involution}]\label{thm_cartan_prolongation}
There exists $q\in\mathbb{N}$ such that $\mathcal{E}^{(q)}$ is compatible. 
 \end{theorem}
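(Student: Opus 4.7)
The plan is to reduce the statement to the classical Cartan--Kuranishi formal integrability theorem, translated into the prolongation--projection language of the preceding subsections. The core algebraic input is the Noetherianity of the polynomial ring on the cotangent space $T^*N$: the principal parts of iterated prolongations organize into a graded module over $S^\bullet T^*N$, and any ascending chain of such modules must stabilize.

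First, at each order I would introduce the symbol $g_{k+l}\subset S^{k+l}T^*N\otimes E$ (where $E$ is the fibre bundle of dependent variables), defined as the kernel of the natural projection $\mathcal{E}^{(l)}\to\mathcal{E}^{(l-1)}$ at the top jet level. Standard symbol calculus realizes $g_{k+l}$ as the $l$-fold prolongation of $g_k$ in the polynomial sense. Applying Hilbert's basis theorem to the appropriately graded dual module yields a threshold $l_0$ beyond which the family $\{g_{k+l}\}$ is involutive, equivalently the Spencer $\delta$-cohomology $H^{*,1}(g_\bullet)$ vanishes in bidegrees $(k+l,1)$ for $l\geq l_0$. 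Second, I would invoke the Goldschmidt criterion: once the symbol is involutive at level $q$, formal integrability is equivalent to the single finite-dimensional condition $\pi_{q+1,q}(\mathcal{E}^{(q+1)})=\mathcal{E}^{(q)}$. If this fails, the projection produces a strict subequation $\widetilde{\mathcal{E}}\subsetneq\mathcal{E}^{(q)}$ of smaller fibre dimension, and the prolongation analysis restarts on $\widetilde{\mathcal{E}}$. Since fibre dimensions are nonnegative integers, this descending chain terminates in finitely many rounds, producing a prolongation with no further compatibility conditions. Under the standing analyticity hypothesis, the Cartan--K\"ahler theorem then promotes formal solutions to genuine local solutions, so $\mathcal{E}^{(q)}$ is compatible in the sense of the preceding definition.

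The main obstacle is not the algebraic stabilization itself but the coupling between symbolic involutivity (controlled by commutative algebra on $T^*N$) and the nonlinear curvature obstructions coming from the lower-order terms of $\mathcal{E}$. Concretely, the projection step $\pi_{q+1,q}$ can, in principle, destroy symbolic involutivity by shrinking the fibres of $\mathcal{E}$, forcing one to recompute the symbol and reverify the vanishing of Spencer cohomology. The standard remedy, due to Kuranishi's refinement of Cartan's original argument, is to choose coordinates adapted to a quasi-regular basis of the symbol and control the fibre dimensions via a double induction on jet order and on the codimension of the successive subequations. For the applications in this paper the situation is much simpler: by Theorem \ref{thm:killing_finite_type} the Killing PDE $\mathcal{E}_d$ is of finite type $d$, so after $d$ prolongations the symbol vanishes identically, involutivity is automatic, and the only remaining question is whether the finitely many curvature obstructions are consistent --- which is exactly what the compatibility analysis of \S\ref{S23} is designed to test.
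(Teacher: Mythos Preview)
The paper does not actually prove this theorem: it is stated as a classical background result (the Cartan--Kuranishi prolongation theorem), with no argument beyond the remark that the claim ``holds also true in infinite type case, under the assumption of analyticity of the equation.'' Your sketch is the standard route to this result --- stabilization of the symbol via Noetherianity/Spencer cohomology, Goldschmidt's criterion reducing formal integrability to a single surjectivity check, the descending-fibre-dimension induction when that check fails, and Cartan--K\"ahler under analyticity --- and it is correct at the level of an outline. Since the paper offers nothing to compare against, there is no divergence of approach to discuss; your final paragraph correctly observes that for the Killing PDE the finite-type property renders the symbolic part trivial, which is precisely why the paper can invoke the theorem without elaboration.
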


Thus, in regular domains, there are only finitely many compatibility conditions.
However to find them explicitly is generally difficult, and bringing to involution in practice is a
formidable computation. We therefore substitute searching for involution by the following criterion.
For the finite type $l$ case:
{\it If $\pi:\mathcal{E}^{(r)}\to\mathcal{E}^{(r-1)}$ is surjective for some $r>l$, 
then $\mathcal{E}^{(r-1)}$ is compatible}.
This is especially simple for linear overdetermined PDEs: over regular domains $U\subset N$ such 
$\mathcal{E}$ are vector bundles and on each step of the prolongation-projection a compatibility 
condition reduces its rank; once this rank is stabilized for one step, then by Theorem 
\ref{thm_frobenius2} the system is compatible,
so the involution level $q$ of Theorem \ref{thm_cartan_prolongation} is achieved.

\subsection{Algorithmic implementation}\label{S24}

The above criterion allows for an effective implementation of evaluation of $\dim K_d$ for a given metric $g$ 
using computer algebra systems.

The Killing PDE $\mathcal{E}_d$ as well as its prolongations $\mathcal{E}_d^{(k)}$ are linear in
$(k+1)$-jets of the dependent variables. We convert this linear system of equations to a 
matrix-valued function $M_k(x)$ on the spacetime. For our class of metrics $g$ the entries are polynomials
with rational coefficients. Hence to make use of computer algebra software, we insert a {\it rational} point $x_0\in N$ 
to obtain a matrix with \textit{rational} coefficients (in this case computer calculations are exact!). 

The first thing to do is to find the points that work nicely with Cartan's prolongation-projection method. We call a point 
$x_0\in N$ \textit{regular} if the function $x\mapsto\op{rank}(M_k(x))$ attains its maximum at $x_0$ for all $k \geq 0$, 
that is, at each step we find the maximal number of compatibility conditions. Note that a regular point is generic, i.e., 
the set of regular points is an open dense subset of $N$. A point is \textit{singular} if it is not regular. 

   \begin{tcolorbox}
\textbf{Algorithm 1. (Cartan's Prolongation Method for Geodesic Flow)}.\newline
(\textbf{Input}: A nonnegative integer $d$, a regular point $x_0$.)
	\begin{itemize}
\item Step 1.) Compute the Poisson bracket $\{H, I_d\}$ of a polynomial in momenta $p$ function 
    $I_d$ with the Hamiltonian $H$. 
\item Step 2.) Collect the coefficients of $\{H, I_d\}$ with respect to 
    the momentum variables. Define the first order linear PDE $\mathcal{E}_d:= 
    \{F = 0:  F \in \text{coeffs}_{p}(\{H, I_d\})  \}$.
\item Step 3.) Set $k:= 0$.
    \begin{itemize}
\item Convert the linear system of equations $\mathcal{E}_d^{(k)}$ 
        w.r.t. the variables $\mathcal{V}_{k+1, d} := 
        \{a_{\alpha}^{i_1 \cdots i_d} : | \alpha | \leq k +1 \}$
        into a matrix $M_k(x)$ that depends on the $x$-coordinates. 
\item Substitute $x_0$ to obtain a matrix $M_k:= 
        M_k(x_0)$, the $k$'th \textit{prolongation matrix}. 
\item Set $\delta_k := \text{columns}(M_k) - \text{rank}(M_k)$.
    \end{itemize}
If $(k\leq d)$ or ($k>d$ and $\delta_k \neq \delta_{k-1}$), increase 
    $k$ by $1$ and repeat Step $3$.
\item Step 4.) Return $(\delta_k, k)$.
	\end{itemize}
(\textbf{Output}: The dimension of the space of Killing $d$-tensors is $\dim K_d = \delta_k$. 
The integer $k$ indicates the number of prolongations necessary to find all compatibility conditions 
of $\mathcal{E}$.)
   \end{tcolorbox}

 \begin{proposition}
Algorithm $1$ is correct and it terminates.
 \end{proposition}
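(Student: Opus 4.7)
The plan is to verify termination and correctness separately, both resting on Theorem \ref{thm:killing_finite_type} (the Killing PDE is of finite type $d$) together with the compatibility criterion at the end of \S\ref{S23}. Recall that $M_k(x_0)$ is the coefficient matrix of the linear system cut out by $\mathcal{E}_d^{(k)}$ in the jet variables $a^{i_1\cdots i_d}_\alpha$, $|\alpha|\leq k+1$, so $\delta_k=\dim\ker M_k(x_0)$ is exactly $\dim(\mathcal{E}_d^{(k)})_{x_0}$, viewed as a linear subspace of $J^{k+1}_{x_0}$. Regularity of $x_0$ means that the rank of $M_k(x)$ attains its maximum there at every level $k$, so no compatibility condition is missed.

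For termination, I would show that $(\delta_k)_{k\geq d}$ is non-increasing. By Theorem \ref{thm:killing_finite_type}, $\mathcal{E}_d^{(d)}$ is of Frobenius type, so for each $k\geq d$ the jet components of order $k+1$ of any element of $\mathcal{E}_d^{(k)}$ are algebraically determined by its components of order $\leq d$. Thus the projection $\pi_{k+1,k}:\mathcal{E}_d^{(k)}\to\mathcal{E}_d^{(k-1)}$ is fibrewise injective, giving $\delta_k\leq\delta_{k-1}$. A non-negative, non-increasing integer sequence stabilizes in finitely many steps, so the loop exits with $\delta_k=\delta_{k-1}$ and $k>d$.

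For correctness, at termination the injection $\pi_{k+1,k}:(\mathcal{E}_d^{(k)})_{x_0}\to(\mathcal{E}_d^{(k-1)})_{x_0}$ is also surjective, hence bijective. The compatibility criterion from \S\ref{S23} then asserts that $\mathcal{E}_d^{(k-1)}$ is compatible. Since $k-1\geq d$, this prolonged system is of Frobenius type, so Theorem \ref{thm_frobenius2} applies: every $\xi\in(\mathcal{E}_d^{(k-1)})_{x_0}$ is the $k$-jet at $x_0$ of a unique local Killing $d$-tensor, producing a linear isomorphism between $(\mathcal{E}_d^{(k-1)})_{x_0}$ and $K_d$. Therefore $\dim K_d=\delta_{k-1}=\delta_k$, the value returned by the algorithm.

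The delicate point I would flag is the monotonicity $\delta_k\leq\delta_{k-1}$. A naive column count leaves this ambiguous, since $M_k$ contains new columns — the jet coordinates of order $k+1$ — that $M_{k-1}$ does not. What saves the inequality is exactly Theorem \ref{thm:killing_finite_type}: past the finite-type threshold those new coordinates are not free but are algebraic consequences of lower-order ones along $\mathcal{E}_d^{(k)}$, so the fibrewise projection to $J^k$ is injective. Once this is settled, termination and correctness both reduce to the surjectivity-implies-compatibility criterion together with the Frobenius theorem already in the excerpt.
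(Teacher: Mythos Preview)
Your proof is correct and follows essentially the same route as the paper's, invoking Theorem~\ref{thm:killing_finite_type}, the compatibility criterion at the end of \S\ref{S23}, and the Frobenius theorem~\ref{thm_frobenius2}. Your termination argument via the monotonicity $\delta_k\leq\delta_{k-1}$ for $k\geq d$ (from injectivity of $\pi_{k+1,k}$ once the symbol vanishes) is more detailed than the paper's, which simply declares termination ``clear'' without elaboration.
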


 \begin{proof}
Termination is clear. We now justify correctness, i.e., that the algorithm computes 
the number of Killing $d$-tensors. Turn the prolongation matrix $M_k$ into row reduced echelon form. 
The equation $\mathcal{E}_d^{(k)}$ is linear and its rank as a bundle over $N$, equal to 
$\text{columns}(M_k)$, counts the number of $(k+1)$-jets of dependent variables. 
Rows of the matrix represent equations defining $\mathcal{E}_d^{(k)}$, so they consist of the 
original Killing PDE, their differential corollaries and compatibility conditions. Consequently,
$\delta_k$ is number of free jets (coordinates on fibers of the equation 
$\mathcal{E}_d\rightarrow N$). 
In view of the Frobenius theorem, each free variable corresponds to a $(k+1)$ jet-solution of the 
Killing PDE.

Now consider the conditions in step 3 determining termination of the loop.
The first part $(k \leq d)$ addresses whether the prolongation has achieved Frobenius type, see Theorem \ref{thm:killing_finite_type}. The second part ($k>d$ and $\delta_k\neq\delta_{k-1}$) checks
whether all compatibility conditions have been computed, as guaranteed by the criterion 
after Theorem \ref{thm_cartan_prolongation}. Thus every $(k+1)$ jet yields a local solution.
 \end{proof}

\subsection{Syzygies and Irreducible Killing Tensors}\label{S25}

The pointwise multiplication of functions gives rise to a linear map
   \begin{equation}
K_{d_1}\otimes K_{d_2}\mapsto K_{d_1+d_2},\ I_{d_1}\otimes I_{d_2}\mapsto I_{d_1}\cdot I_{d_2}.
   \end{equation}
A \textit{relation (syzygy)} among Killing tensors of rank $d_1$ and $d_2$ with $d_1\neq d_2$ 
is an element of the kernel of the map 
   \begin{equation}
K_{d_1} \otimes K_{d_2} \rightarrow K_{d_1 + d_2}.
   \end{equation}
If $d_1=d_2=:d$ a relation is given by an element in the kernel of the map $S^2K_d \rightarrow K_{2d}$.

A Killing $d$-tensor ($d \geq 2$) is \textit{irreducible} if it is not a linear combination of
the symmetric product of lower rank Killing tensors. The number of irreducible 
Killing $d$-tensors can be found using the number of syzygies. We demonstrate this for Killing $2$-tensors. 
The space of irreducible Killing $2$-tensors can be identified with the cokernel of map $\iota_2: S^2K_1 
\rightarrow K_2$, fitting into a short exact sequence
   \begin{equation}
0 \longrightarrow \text{Ker}\ \iota_2  \longrightarrow S^2K_1 \rightarrow K_2
\longrightarrow \text{Coker}\ \iota_2 \longrightarrow 0.
   \end{equation}
The space of irreducible Killing $3$-tensors can be identified with the cokernel of the map $\iota_3: K_1 \otimes K_2 
\rightarrow K_3$, etc. The number of syzygies among Killing tensors is found as 
follows. (We use the notation $\text{Taylor}(a(x), x_0, k)$ for the Taylor 
polynomial of the function $a$ around $x_0$ up to order $k$.)
   \begin{tcolorbox}
\textbf{Algorithm 2. (Finding Relations among Killing Tensors)}.\newline
(\textbf{Input}: Nonnegative integers $d_1, d_2$, a regular point $x_0$.)
	\begin{itemize}
\item Step 1.) For $s=1,2$: run algorithm 1 obtain $\dim K_{d_s}$ and 
    the number of prolongation $k_s$ needed to achieve compatibility.\\ 
    Consider the polynomial $I_{k_s+1, d_s} := \text{Taylor}(a^{i_1 \cdots i_{d_s}}, x_0, 
    k_s+1)\cdot p_{i_1} \cdots p_{i_{d_s}}$.
\item Step 2.) Consider the linear algebraic system of equations
    $\{\text{Taylor}(c, x_0, k_s) = 0: c \in \text{coeffs}_p(\{H, I_{k_s+1, d_s}\}) \}$ on
    the variables $\mathcal{V}_{k_s+1, d_s}(x_0)
    := \{a^{i_1 \cdots i_{d_s}}_{\alpha}(x_0) : |\alpha| \leq k_s + 1 \}$ for $s=1,2$. 
    Solve these linear equations and substitute the corresponding solutions into $I_{k_s +1,
    d_s}$ to obtain the truncated integrals $I_{k_s +1 , d_s}^j$ for $1 \leq j \leq \dim 
    K_{d_s}$.
\item Step 3.) Set \vskip-15pt
    \begin{equation*}
T:= \sum_{l_1=1}^{\dim K_{d_1}}\sum_{l_2=1}^{\dim K_{d_2}} c_{l_1,l_2}\ I_{k_1+1,d_1}^{l_1}\cdot I_{k_2+1,d_2}^{l_2}.
    \end{equation*}
    Define $S :=\{\text{Taylor}(c, x_0, d_1+d_2) : c \in\text{coeffs}_p(T)\})$.
\item Step 4.) Solve the linear algebraic system of equations $ \{F = 0 : F\in\text{coeffs}_{x}(S) \}$ in terms 
of the coefficients $c_{l_1,l_2}$, and denote the resulting solution space $R$. 
\item Step 5.) Return $R$ and $\dim R$.
	   \end{itemize}
(\textbf{Output}: Relations among Killing tensors of rank $d_1$ and $d_2$;
$\#$ (indep) syzygies $=\dim R$.)
   \end{tcolorbox}

   \begin{proposition}
Algorithm 2 is correct and it terminates.
   \end{proposition}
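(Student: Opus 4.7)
The plan is to address termination and correctness in turn, with correctness being the substantive claim.

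Termination is immediate. Algorithm 1, invoked in Step 1, terminates by the previous proposition, and every subsequent operation --- forming the Taylor ansatz, solving the truncated Killing system at the point $x_0$, multiplying out $T$, extracting the $(d_1+d_2)$-Taylor polynomials of its $p$-coefficients, and computing the nullspace in the $c_{l_1,l_2}$ --- is a finite-dimensional linear algebra task over $\mathbb{Q}$.

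For correctness I would proceed in three steps. First, by Algorithm 1 together with the Frobenius Theorem \ref{thm_frobenius2}, each admissible $(k_s+1)$-jet at the regular point $x_0$ extends uniquely to a genuine Killing tensor $J^{j}_{d_s}\in K_{d_s}$, so the truncated integrals $\{I^{j}_{k_s+1,d_s}\}_{j=1}^{\dim K_{d_s}}$ produced in Step 2 are the initial jet data of a basis of $K_{d_s}$. Second, since the product of two polynomial integrals is again an integral, $\sum c_{l_1,l_2}\,J^{l_1}\cdot J^{l_2}$ lies in $K_{d_1+d_2}$; applying Theorem \ref{thm_frobenius2} to $\mathcal{E}_{d_1+d_2}$ (first order, finite type $d_1+d_2$), this combination vanishes identically on $T^*N$ if and only if its $(d_1+d_2)$-jet at $x_0$ vanishes, so syzygies are detected by a finite jet condition. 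Third, one must verify that the $(d_1+d_2)$-Taylor polynomial of each $p$-coefficient of $T$ agrees with that of the corresponding coefficient of $\sum c_{l_1,l_2}\,J^{l_1}\cdot J^{l_2}$; a Leibniz-rule bookkeeping shows this holds provided $k_s+1\geq d_1+d_2$ for $s=1,2$, in which case Step 4 cuts out exactly the kernel of $K_{d_1}\otimes K_{d_2}\to K_{d_1+d_2}$ (respectively of $S^2K_d\to K_{2d}$ when $d_1=d_2$, after accounting for the trivial permutation symmetries in the indexing of $c_{l_1,l_2}$).

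The main obstacle is precisely this last Taylor-order comparison. The $k_s$ produced by Algorithm 1 need only exceed $d_s$, which can be too small when $d_1+d_2$ is large relative to $d_s$, so that the approximation $I^{l_s}\approx J^{l_s}$ is only controlled modulo $O((x-x_0)^{k_s+2})$ and the error term in the product is of order $\min(k_1,k_2)+2$. The fix is painless: by Theorem \ref{thm:killing_finite_type} no new compatibility conditions appear past the Frobenius level, so one simply keeps prolonging until $k_s\geq d_1+d_2-1$ without altering $\dim K_{d_s}$ --- the extra iterations only determine higher-order Taylor coefficients via the PDE itself. With this extension in place, Step 5 returns $R$ equal to the syzygy space, and $\dim R$ is the number of independent syzygies, as claimed.
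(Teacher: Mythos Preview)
Your argument runs along the same lines as the paper's: identify $(k{+}1)$-jets at $x_0$ with Taylor polynomials, observe that membership in $\mathcal{E}_d^{(k)}$ amounts to $\{H,I_{k+1,d}\}$ vanishing to order $k$ at $x_0$ (this is all the paper says for Steps~1--2), and invoke Theorem~\ref{thm:killing_finite_type} to conclude that a Killing $(d_1{+}d_2)$-tensor is determined by its $(d_1{+}d_2)$-jet (this is all the paper says for Step~3).

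You go beyond the paper's brief sketch, and the Taylor-order issue you flag is a genuine point. Algorithm~1 only guarantees $k_s\geq d_s+1$, so a priori $\min(k_1,k_2)+1$ may fall short of $d_1+d_2$; in that case the truncated product $T$ need not carry the correct $(d_1{+}d_2)$-jet of the true product $J^{l_1}J^{l_2}$, and Step~4 could in principle return the wrong space. The paper's proof does not address this: it asserts that all jets of a Killing $d_s$-tensor are computable from its $d_s$-jet, without verifying that the algorithm as written actually carries out that computation to the required depth. Your fix---continue prolonging until $k_s\geq d_1+d_2-1$, which past the Frobenius level only pins down higher Taylor coefficients and leaves $\dim K_{d_s}$ unchanged---is exactly what closes the gap, and is presumably what the paper intends.
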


   \begin{proof}
For $d \geq 1$, consider the Killing PDE $\mathcal{E}_d\subseteq J^1$. A $(k+1)$-jet $j_{x_0}^{k+1}u$ 
of a vector-function $u= (a^{i_1\cdots i_d}(x))$ can be identified with the Taylor polynomial 
$I_{k+1,d} =\text{Taylor}(a^{i_1 \cdots i_d}, x_0, k+1) p_{i_1} \cdots p_{i_d}$. 
Under this correspondence, we have that $j_{x_0}^{k+1}u\in\mathcal{E}^{(k)}$ if and only if 
$\{H,I_{k+1,d}\}$ vanishes up to order $k$ at $x_0$. These observations explain steps 1 and 2.

By \Cref{thm:killing_finite_type} $K_d$ is determined by $d$-jets in the sense
that we can compute all jets of a Killing $d$-tensor at a point if we know its $d$-jet. 
Thus, in step $3$ we must include the jets up to order $d_1 + d_2$ in order to determine uniquely 
the corresponding $(d_1 + d_2)$-tensor.
   \end{proof}

\textbf{Application.} In practice we apply algorithm 2 as follows.
First, using algorithm $1$ we compute the dimensions of $S^2K_1$ and $K_2$. Then we 
use algorithm $2$ to determine the dimension of the kernel $\text{Ker}\ \iota_2$. 
Finally, the number of (lin.\ independent) irreducible Killing $2$-tensors is given by
  \begin{equation*}
\dim \text{Coker}\ \iota_2 = \dim K_2 - \dim S^2K_1 + \text{Ker}\ \iota_2.
  \end{equation*}
This method can be readily generalized to higher order Killing tensors.

\medskip

\textbf{Regular and singular points.}
Even though the regular points are dense, it is difficult to verify (in practice) that
a given point is regular. Thus, we must be careful in order to get rigorous results. 
For a singular point, algorithm 1 gives an upper 
bound on the number of Killing tensors. The number of syzygies imply
lower bounds on the number of Killing tensors (indeed, the syzygies 
imply the number of reducible Killing tensors). Thus, whenever the 
algorithms suggest the existence of an irreducible Killing tensor it is
important to find it explicitly. (For our metrics $g$ it turns out to be   
possible to find the irreducible Killing tensors explicitly using Maple's pdsolve.)

\newpage 
\subsection{Note on the computability of the algorithm.}\label{S26}

We briefly discuss the computational difficulties associated with the proposed method and how we 
deal with them. Dimension of the prolongation matrix $M_k$ from algorithm 1 equals
   \begin{equation*}
\text{rows}(M_k) = {n+d \choose d+1}\cdot {n+k\choose n},\quad
\text{columns}(M_k) = {n+d-1\choose d}\cdot {n+k+1\choose n}.
   \end{equation*}
In particular, we see that the number of rows grows faster with $k$ than the number of columns.
We highlight several elements that have made the computer implementation 
more efficient:

   \begin{itemize}[leftmargin=*]
\item (\textbf{LinBox}). The LinBox package \cite{LinBox} in Sage allows for incredibly fast rank computations 
of large sparse integer matrices. For example, computing the rank of the quartic 
prolongation matrix $M_{19}$ for metric 2 with size $(495880) \times (371910) $ took less than an hour. 
In comparison, rank computations of smaller matrices (say 50000 by 40000) would take several days 
in Maple or not give a result at all. Thanks to LinBox, the time to compute the ranks is 
negligible. Generating a 
prolongation matrix takes by far the longest time of the steps in algorithm 1.

\medskip

\item (\textbf{Exploiting Sparsity}.) The prolongation matrices $M_k$ that we encounter here are sparse 
(with density $<$ 0.001). It is important that the generation of the matrix reflects this. We generate 
the initial matrix with all entries zeroes and then substitute the nonzero values. 

\medskip

\item (\textbf{Combinatorial Description of Prolongations}.)
For the quartic case, we used a combinatorial description of the prolongation equations. 
We demonstrate this for metric $2$. Since $I_4$ is of degree 4, we have that $\{H, I_4\}$ is of degree 5 in momenta. 
Thus, we can write $\{H, I_4\} = \text{coeff}_{\tau} p^{\tau}$ where $p^{\tau} = 
p_1^{\tau_1}p_2^{\tau_2}p_3^{\tau_3}p_4^{\tau_4}$. Given a multi-index $\tau$ of 
length $5$, we obtain the $p^{\tau}$-coefficient in terms of the coefficients of $I$:
    \begin{equation*}
\begin{split}
\hphantom{aaa}\text{coeff}_{\tau}(\{H, I_4\}) & =  2 \partial_1(a^{\tau - 1_1})  + 2 
            \partial_2(a^{\tau - 1_2}) - 2 \partial_4(a^{\tau - 1_3}) - 
            2\partial_3(a^{\tau - 1_4}) \\
           & + 4x^3((x^1)^2 + (x^2)^2) \partial_4(a^{\tau - 1_4}) - 2((x^1)^2 + 
           (x^2)^2) \frac{(\tau + 1_3 - 2 \cdot 1_4)!}{(\tau - 2\cdot 1_4)} \\
            & -4 x^1 x^3  \frac{(\tau +1_1  - 2 \cdot 1_4)!}{(\tau - 2\cdot 1_4)!} 
            a^{\tau + 1_1 - 2\cdot 1_4} -4 x^2 x^3 \frac{(\tau +1_2 - 2 \cdot 
            1_4)!}{(\tau - 2\cdot 1_4)!} a^{\tau + 1_2 - 2\cdot 1_4}.
\end{split}
    \end{equation*}
Using the Leibniz rule for multi-index notation, we can subsequently determine the 
general expression for the derivative $\partial^{\alpha}(\text{coeff}_{\tau}(\{H, I_4\}))$, where 
$\alpha$ is a multi-index. In this way we obtain the equations of the prolongation 
as a function of the multi-indices $\tau$ and $\alpha$. This combinatorial 
description significantly reduces the time needed to generate the equations in 
Maple, especially as the order increases. This approach is most beneficial 
for Hamiltonians which are polynomials of low order in the independent $x$-variables. 
(For the Kerr metric, for example, these combinatorics would be unfeasible.) 
   \end{itemize}

\section{Computations and results} \label{S3}

Here we discuss the results of concrete computations with the above algorithms.
We begin with investigations of the special cases of pp-waves and Wils metric and 
then discuss the general case.

\subsection{Conformally Flat pp-Waves}\label{S31}
These are given by the following formula:
   \begin{equation}\label{ppw}
g= 2dx^3dx^4 + \bigl(f(x^3)((x^1)^2+(x^2)^2)\bigr)(dx^3)^2  -(dx^1)^2 -(dx^2)^2
   \end{equation}

Sippel and Goenner classified pp-waves in terms of their isometry groups \cite{SG}. 
For conformally flat pp-waves there are three classes: $f(x^3)=c$, $f(x^3)=c(x^3)^{-2}$ 
and the generic case with $\dim K_1=6$. We apply our prolongation-projection algorithm
to the following four metrics (rescaling of $f$ does not play a role for the first three metrics): 
 $$
{\rm(i)}\ f(x^3)=1,\quad {\rm(ii)}\ f(x^3)=x^3,\quad {\rm(iii)}\ f(x^3)=(x^3)^{2},\quad {\rm(iv)}\ f(x^3)=2 (x^3)^{-2}.
 $$ 
If two subsequent values $\delta_k$, $\delta_{k+1}$ 
are equal (with $k \geq d$), the sequence of $\delta$-values stabilizes 
and we can read off the number of Killing $d$-tensors. In the table this is shown by circling 
this $\delta$-value. 

\begin{table}[h!]
{\rowcolors{1}{red!30!green!30!blue!20!}{}
\begin{tabular}{||c | c | c |  c  | c | c ||} 
    \hline
    Linear\hspace{0.56cm} & $\mathcal{E}$ & $\mathcal{E}^{(1)}$ & 
    $\mathcal{E}^{(2)}$ & $\mathcal{E}^{(3)}$ & $\mathcal{E}^{(3)}$ \\ [0.5ex]
    \hline
    $\delta$ & 10 & 10  & 7  & \circled{7} & \dots \\ 
    \hline
    Quadratic & $\mathcal{E}$ & \dots & $\mathcal{E}^{(4)}$ 
    & $\mathcal{E}^{(5)}$ & $\mathcal{E}^{(6)}$ \\ [0.5ex] 
    \hline
    $\delta$ & 30 & \dots & 29 & 28 & \circled{28} \\ 
    \hline
    Cubic\hspace{0.64cm} & $\mathcal{E}$ & $\dots$ & $\mathcal{E}^{(6)}$ & 
    $\mathcal{E}^{(7)}$ & $\mathcal{E}^{(8)}$ \\[0.5ex]
    \hline
    $\delta$ & 65 & \dots & 87 & 84 & \circled{84} \\
    \hline
    Quartic\hspace{0.39cm} & $\mathcal{E}$ & \dots & $\mathcal{E}^{(10)}$ & 
    $\mathcal{E}^{(11)}$ & $\mathcal{E}^{(12)}$ \\[0.5ex] 
    \hline
    $\delta$ & 119 & \dots & 211  & 210 & \circled{210}  \\ 
    \hline
\end{tabular}}
{\rowcolors{1}{red!30!green!30!blue!20!}{}
\begin{tabular}{||c | c | c |  c  | c | c ||} 
    \hline
    Linear\hspace{0.56cm} & $\mathcal{E}$ & $\mathcal{E}^{(1)}$ & 
    $\mathcal{E}^{(2)}$ & $\mathcal{E}^{(3)}$ & $\mathcal{E}^{(3)}$ \\ [0.5ex]
    \hline
    $\delta$ & 10 & 10  & 7  & 6 & \circled{6} \\ 
    \hline
    Quadratic & $\mathcal{E}$ & \dots & $\mathcal{E}^{(5)}$ 
    & $\mathcal{E}^{(6)}$ & $\mathcal{E}^{(7)}$ \\ [0.5ex] 
    \hline
    $\delta$ & 30 & \dots & 24 & 22 & \circled{22} \\ 
    \hline
    Cubic\hspace{0.64cm} & $\mathcal{E}$ & $\dots$ & $\mathcal{E}^{(11)}$ & 
    $\mathcal{E}^{(12)}$ & $\mathcal{E}^{(13)}$ \\[0.5ex]
    \hline
    $\delta$ & 65 & \dots & 63 & 62 & \circled{62} \\
    \hline
    Quartic\hspace{0.39cm} & $\mathcal{E}$ & \dots & $\mathcal{E}^{(17)}$ & 
    $\mathcal{E}^{(18)}$ & $\mathcal{E}^{(19)}$ \\[0.5ex] 
    \hline
    $\delta$ & 119 & \dots & 150   & 148   & \circled{148}  \\ 
    \hline
\end{tabular}}
\end{table}

\begin{table}[h!]
{\rowcolors{1}{red!30!green!30!blue!20!}{}
\begin{tabular}{||c | c | c |  c  | c | c ||} 
    \hline
    Linear\hspace{0.56cm} & $\mathcal{E}$ & $\mathcal{E}^{(1)}$ & 
    $\mathcal{E}^{(2)}$ & $\mathcal{E}^{(3)}$ & $\mathcal{E}^{(3)}$ \\ [0.5ex]
    \hline
    $\delta$ & 10 & 10  & 7  & 6 & \circled{6} \\ 
    \hline
    Quadratic & $\mathcal{E}$ & \dots & $\mathcal{E}^{(5)}$ 
    & $\mathcal{E}^{(6)}$ & $\mathcal{E}^{(7)}$ \\ [0.5ex] 
    \hline
    $\delta$ & 30 & \dots & 24 & 22 & \circled{22} \\ 
    \hline
    Cubic\hspace{0.64cm} & $\mathcal{E}$ & $\dots$ & $\mathcal{E}^{(11)}$ & 
    $\mathcal{E}^{(12)}$ & $\mathcal{E}^{(13)}$ \\[0.5ex]
    \hline
    $\delta$ & 65 & \dots & 63 & 62 & \circled{62} \\
    \hline
    Quartic\hspace{0.39cm} & $\mathcal{E}$ & \dots & $\mathcal{E}^{(17)}$ & 
    $\mathcal{E}^{(18)}$ & $\mathcal{E}^{(19)}$ \\[0.5ex] 
    \hline
    $\delta$ & 119 & \dots & 150   & 148   & \circled{148}  \\ 
    \hline
\end{tabular}}
{\rowcolors{1}{red!30!green!30!blue!20!}{}
\begin{tabular}{||c | c | c |  c  | c | c ||} 
    \hline
    Linear\hspace{0.56cm} & $\mathcal{E}$ & $\mathcal{E}^{(1)}$ & 
    $\mathcal{E}^{(2)}$ & $\mathcal{E}^{(3)}$ & $\mathcal{E}^{(3)}$ \\ [0.5ex]
    \hline
    $\delta$ & 10 & 10  & 7  & \circled{7} & \dots \\ 
    \hline
    Quadratic & $\mathcal{E}$ & \dots & $\mathcal{E}^{(4)}$ 
    & $\mathcal{E}^{(5)}$ & $\mathcal{E}^{(6)}$ \\ [0.5ex] 
    \hline
    $\delta$ & 30 & \dots & 29 & 28 & \circled{28} \\ 
    \hline
    Cubic\hspace{0.64cm} & $\mathcal{E}$ & $\dots$ & $\mathcal{E}^{(6)}$ & 
    $\mathcal{E}^{(7)}$ & $\mathcal{E}^{(8)}$ \\[0.5ex]
    \hline
    $\delta$ & 65 & \dots & 87 & 84 & \circled{84} \\
    \hline
    Quartic\hspace{0.39cm} & $\mathcal{E}$ & \dots & $\mathcal{E}^{(10)}$ & 
    $\mathcal{E}^{(11)}$ & $\mathcal{E}^{(12)}$ \\[0.5ex] 
    \hline
    $\delta$ & 119 & \dots & 211  & 210 & \circled{210}  \\ 
    \hline
\end{tabular}}
\caption{Left up: Metric (i) $f(x^3) = 1$;\hspace{1.5cm} Right up: Metric (ii) $f(x^3)=x^3$.\newline
Left down: Metric (iii) $f(x^3)=(x^3)^{2}$;\hspace{1.95cm} Right down: Metric (iv) 
$f(x^3)=2(x^3)^{-2}$.}
\end{table}

We see that metrics 1 and 4 have 7-dimensional isometry, which is 
consistent with the classification by Sippel and Goenner. Note that for
the quartic case of metrics 2 and 3 we have to go all the way to the 
19'th prolongation of the Killing PDE. The number of equations and 
variables at this stage are so large that it is unlikely that we can 
compute the number of Killing $5$-tensors with present computational powers.

\subsection{Syzygies and irreducible Killing tensors for pp-waves}\label{S32}

In order to find the number of irreducible Killing tensors, we have 
to take into account the number of syzygies among the Killing 
tensors. We demonstrate this for metric 2 (the other cases are similar). 
Consider the following short exact sequence
\begin{equation*}
    0 \longrightarrow \underbrace{\text{Ker}\ \iota_2}_{\text{1 
    syzygy}} \rightarrow 
    \underbrace{S^2K_1}_{21-\text{dim.}} 
    \xrightarrow{\iota_2} \underbrace{K_2}_{22-\text{dim}.} \longrightarrow 
    \underbrace{\text{Coker}\ \iota_2}_{\text{2 irreducible Killing 
    $2$-tensors}} \rightarrow 0
\end{equation*}
Algorithm 2 gives $\dim \text{Ker}\ \iota_2 = 1$, and so there are 2 
irreducible Killing 2-tensors. Next, we consider
\begin{equation*}
    0 \longrightarrow \underbrace{\text{Ker}\ \iota_3}_{\text{70 syzygies}} \rightarrow 
    \underbrace{K_1 \otimes K_2}_{132-\text{dim.}} 
    \xrightarrow{\iota_3} \underbrace{K_3}_{62-\text{dim}.} \longrightarrow 
    \underbrace{\text{Coker}\ \iota_3}_{\text{0 irreducible Killing 
    $3$-tensors}} \rightarrow 0\end{equation*}
Algorithm $2$ gives $\dim \text{Ker}\ \iota_3 = 70$ and so there are no
irreducible Killing $3$-tensors. Since there are no irreducible Killing $3$-tensors, the source space 
of $\iota_4$ is the second symmetric power $S^2K_2$. (If there were irreducible Killing $3$-tensors, 
then the source space would be $K_1\otimes K_3\oplus S^2K_2$.) Thus we obtain the short exact sequence
\begin{equation*}
    0 \longrightarrow \underbrace{\text{Ker}\ \iota_2}_{\text{105 syzygies}} \rightarrow 
    \underbrace{S^2K_2}_{253-\text{dim.}} 
    \xrightarrow{\iota_4} \underbrace{K_4}_{148-\text{dim}.} \longrightarrow 
    \underbrace{\text{Coker}\ \iota_4}_{\text{0 irreducible Killing 
    $4$-tensors}} \rightarrow 0
\end{equation*}
There are 105 syzygies, it follows that there are no irreducible Killing $4$-tensors.

For metrics ${\rm(i)}$, ${\rm(ii)}$ and ${\rm(iii)}$ we obtain that there exists one irreducible Killing $2$-tensor, 
in addition to the Hamiltonian. Actually, we can explicitly write this Killing 2-tensor as follows:
   \begin{equation}\label{qi}
J:= -x^3 H + x^1 p_1 p_4 + x^2 p_2 p_4 + 2 x^4 p_4^2.
   \end{equation}
For metric ${\rm(iv)}$ the only irreducible Killing $2$-tensor is the Hamiltonian $H$, i.e., the Killing 
2-tensor $J$ is reducible in this case (due to the existence of an extra Killing vector). 

In the general case \eqref{ppw} for $f(u)\neq c,cu^{-2}$ the Killing vectors are the following:
 \begin{equation}\label{KVppw}
I_1= p_1x^2-p_2x^1,\ I_2=p_4,\ I_{3,4}=a_{1,2}(x^3)p_1+a_{1,2}'(x^3)x^1p_4,\
I_{5,6}=a_{1,2}(x^3)p_2+a_{1,2}'(x^3)x^2p_4,
 \end{equation}
where $a_i$ ($i=1,2$) are fundamental solutions of the linear second order ODE $a''+fa=0$, i.e.,
solutions satisfying the initial conditions $a_1(0)=1,a'_1(0)=0$, $a_2(0)=0,a'_2(0)=1$.
The Hamiltonian is equal to
 \begin{equation}\label{Hppw} 
H=2p_3p_4-p_1^2-p_2^2-((x^1)^2+(x^2)^2)f(x^3)p_4^2
 \end{equation}
and the other quadratic integral $J$ is given by \eqref{qi} (also for general $f$).
These results are consistent with the following theorem by Keane and Tupper \cite{KT} that 
was proven using the Koutras algorithm \cite{Kou} (our approach is different).

   \begin{theorem}[{\cite{KT}}]
A conformally flat pp-wave with $\dim K_1=6$ or with $f(x^3)=c$ admits an irreducible Killing 
2-tensor, independent of the (irreducible) Hamiltonian $H$. 
   \end{theorem}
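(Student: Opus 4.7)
The plan is to exhibit the Killing $2$-tensor $J$ of \eqref{qi} as the claimed object and to verify two things directly, without invoking the prolongation algorithm: (a) that $\{H,J\}=0$ holds for every profile function $f(x^3)$, and (b) that $J\notin \R H+S^2K_1$ in the two stated cases, namely $\dim K_1=6$ and $f(x^3)=c$.

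For (a), I would decompose $J=-x^3 H+x^1p_1p_4+x^2p_2p_4+2x^4p_4^2$ and apply the Leibniz rule for the Poisson bracket. The contribution from the first summand is just $-\{H,x^3\}H=-2p_4\,H$ since $\{H,H\}=0$. Each bracket $\{H,x^i p_i p_4\}$ for $i=1,2$ yields $-2p_i^2p_4+2(x^i)^2f(x^3)\,p_4^3$, and $\{H,2x^4p_4^2\}$ yields $4p_3p_4^2-4((x^1)^2+(x^2)^2)f(x^3)\,p_4^3$. Summing these four contributions and substituting the explicit form of $H$ from \eqref{Hppw}, the $f$-dependent cubic terms in $p$ cancel against the $-2p_4\,H$ piece, as do all other monomials in momenta. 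Hence $\{H,J\}=0$ identically in $f$, so $J$ is a Killing $2$-tensor for every conformally flat pp-wave of the form \eqref{ppw}.

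For (b), I would use an $x^4$-coefficient argument. The six Killing vectors in \eqref{KVppw} depend on $x^1,x^2,x^3$ only; in the extra case $f(x^3)=c$ the seventh Killing vector is the translation $\partial_{x^3}=p_3$, which is again $x^4$-independent. Since $H$ from \eqref{Hppw} also carries no $x^4$, every element of $\R H+S^2K_1$ is a polynomial in momenta whose coefficients are functions of $(x^1,x^2,x^3)$ alone. On the other hand, the summand $2x^4p_4^2$ of $J$ contributes the coefficient $2x^4$, which depends nontrivially on $x^4$. This coefficient cannot be produced by any linear combination of $H$ and of pairwise products of Killing vectors, so $J$ is irreducible and independent of $H$ modulo $S^2K_1$.

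The main obstacle is the computational step (a): one must track the cancellation of the $f(x^3)$-dependent $p_4^3$ terms for arbitrary $f$, which is where the specific coefficients $-x^3$, $x^1$, $x^2$, $2x^4$ in the ansatz \eqref{qi} are dictated. Step (b) is then essentially automatic, since in both cases the Killing vectors can be read off explicitly from \eqref{KVppw} and from the integration of the Killing equation for $f=c$, and the $x^4$-grading immediately separates $J$ from $\R H+S^2K_1$.
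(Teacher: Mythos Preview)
Your proposal is correct. The Poisson-bracket verification in (a) is accurate (the three non-Hamiltonian brackets sum to $2p_4H$, cancelling $-\{H,x^3\}H=-2p_4H$), and the $x^4$-grading argument in (b) cleanly settles irreducibility: none of the Killing vectors \eqref{KVppw}, nor the extra $p_3$ in the $f=c$ case, nor $H$ itself, carries any $x^4$-dependence, so the summand $2x^4p_4^2$ cannot be absorbed.

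Your route is genuinely different from what the paper does. The paper does not give a proof of this theorem in the text; it attributes the result to Keane--Tupper (proved there via the Koutras algorithm) and then confirms it computationally for the specific profiles $f(x^3)=1,x^3,(x^3)^2$ by running Algorithms~1 and~2, counting dimensions and syzygies to conclude that exactly two irreducible quadratic integrals exist. Your argument, by contrast, is a direct hand verification that works uniformly for \emph{every} profile $f$ (and for the $f=c$ case), needing neither the prolongation machinery nor the syzygy count. What the paper's approach buys is the sharper statement that $H$ and $J$ are the \emph{only} irreducible Killing $2$-tensors in those cases; what your approach buys is a self-contained, fully general existence proof that does not rely on computer algebra or on the external reference.
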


By using our computational algorithm we can also establish the nonexistence results of higher 
order Killing tensors for these conformally flat pp-waves.

   \begin{theorem}
A conformally flat pp-wave \eqref{ppw} with $f(u)=cu^m$, $m=0,1,2$, or $f(u) = 2u^{-2}$, admits no 
irreducible Killing $3$- and $4$-tensors.
   \end{theorem}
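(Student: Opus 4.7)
My plan is to reduce the claim to two applications of the prolongation--projection machinery (Algorithms 1 and 2) exactly as was demonstrated in \S\ref{S32} for metric (ii), and then to check that the resulting cokernel counts vanish for all four metrics (i)--(iv). First I would invoke the data already collected in Table 2, which gives $\dim K_1, K_2, K_3, K_4$ for each of the four functions $f(u)=1, u, u^2, 2u^{-2}$, under the explicit verification (via a concrete choice of regular point $x_0$) that Algorithm 1 has actually stabilized at the reported $\delta_k$. The output values are $(\dim K_1,\dim K_2,\dim K_3,\dim K_4)=(7,28,84,210)$ for metrics (i) and (iv), and $(6,22,62,148)$ for metrics (ii) and (iii).

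Next I would study the cokernel of $\iota_3:K_1\otimes K_2\to K_3$ for each metric. Applying Algorithm 2 with $(d_1,d_2)=(1,2)$ at a regular point $x_0$ yields $\dim\operatorname{Ker}\iota_3$, and the identity
\begin{equation*}
\dim\operatorname{Coker}\iota_3=\dim K_3-\dim(K_1\otimes K_2)+\dim\operatorname{Ker}\iota_3
\end{equation*}
then has to come out to $0$ in all four cases. For metric (ii) this has already been carried out ($70$ syzygies, $132-62=70$). For the other three metrics one repeats the algorithm; the expected syzygy counts are dictated by the arithmetic: for (i) and (iv) one needs $\dim\operatorname{Ker}\iota_3=\dim(K_1\otimes K_2)-\dim K_3=7\cdot28-84=112$, and for (iii) the target is again $70$. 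Once $\operatorname{Coker}\iota_3=0$ is established, every Killing $3$-tensor is a product $X\cdot T$ with $X\in K_1$ and $T\in K_2$.

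With that in hand, the source space of $\iota_4$ collapses from $K_1\otimes K_3\oplus S^2K_2$ onto $S^2K_2$, as explained in the paper, and I would repeat the syzygy count, now with $(d_1,d_2)=(2,2)$ (symmetrised) using Algorithm 2. The vanishing of $\operatorname{Coker}\iota_4$ reduces to the equality
\begin{equation*}
\dim\operatorname{Ker}\iota_4=\dim S^2K_2-\dim K_4,
\end{equation*}
which for (ii) is the verified $253-148=105$, and the analogous numerical target must be checked for the other three metrics. Note that the existence of the extra Killing vector for (i) and (iv) causes the Killing tensor $J$ of \eqref{qi} to be reducible in those cases, so the irreducible summand that was visible for (ii)--(iii) at level $d=2$ does not contribute additional irreducible candidates at $d=3,4$ there either.

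The hard part is not conceptual but computational, together with the regularity check. For the quartic step one already has to handle matrices of the size used for $M_{19}$, and generating (rather than row-reducing) the full prolongation matrices is the bottleneck; the combinatorial description of the coefficients of $\{H,I_4\}$ from \S\ref{S26}, together with \textbf{LinBox} rank computations over $\mathbb{Q}$ at a rational point $x_0$, is essential. The regularity of $x_0$ must be checked by verifying that the ranks of $M_k(x)$ do not jump at $x_0$, which in practice is done by confirming that the resulting $\dim K_d$ matches the lower bound coming from the reducible tensors already exhibited. If the upper bound from Algorithm 1 meets that explicit lower bound, then $x_0$ is regular and $\operatorname{Coker}\iota_3=\operatorname{Coker}\iota_4=0$ holds honestly on an open dense set, completing the proof.
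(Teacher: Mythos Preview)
Your approach is essentially the same as the paper's: apply Algorithms~1 and~2 to the four specific metrics (i)--(iv), read off $\dim K_d$ from the stabilised $\delta$-values, and verify $\operatorname{Coker}\iota_3=\operatorname{Coker}\iota_4=0$ via the exact-sequence arithmetic. The paper's own proof is a one-liner that simply points back to those computations.

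One small omission: the theorem is stated for $f(u)=cu^m$ with an arbitrary nonzero constant $c$ when $m=0,1,2$, whereas the tabulated computations were carried out for $c=1$. The paper closes this gap with a ``rescaling argument'' (already flagged before the tables: rescaling of $f$ does not affect the first three metrics), which you should mention explicitly. Also, the data you cite lives in Table~1 (the pp-wave tables of \S\ref{S31}), not Table~2, which contains the Wils-metric computations.
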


  \begin{proof}
The result follows straightforwardly from the above computations and a rescaling argument. 
  \end{proof}

   \begin{corollary}
For a generic conformally flat pp-wave \eqref{ppw} all $3$- and $4$- Killing tensors are combinations 
of Killing vectors \eqref{KVppw}, the Hamiltonian $H$ \eqref{Hppw} and the Killing 2-tensor $J$ \eqref{qi}.
   \end{corollary}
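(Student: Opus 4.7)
The plan is to run the short exact sequence machinery of Section~\ref{S25} for $d=3$ and $d=4$ at a regular point for generic $f$, reusing the dimension counts already obtained for the representative generic pp-waves (ii) and (iii) in Section~\ref{S31}. By Sippel--Goenner, for generic $f$ the conformally flat pp-wave has $\dim K_1=6$, and by upper semicontinuity of $\mathrm{rank}\, M_k$ in the jet of $f$, the values $\dim K_2=22$, $\dim K_3=62$, $\dim K_4=148$ obtained by Algorithm~1 for metrics (ii) and (iii) are in fact the generic ones (they can only jump up on a proper closed subset of the parameter space).

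First I would handle the degree~2 case as a base: the 6 Killing vectors of \eqref{KVppw} span $K_1$, so $\dim S^2K_1=21$; Algorithm~2 shows $\dim\ker\iota_2=1$; hence the short exact sequence forces $\dim\mathrm{Coker}\,\iota_2 = 22-21+1 = 2$. The Hamiltonian $H$ from \eqref{Hppw} and the tensor $J$ from \eqref{qi} are two explicit classes in this cokernel, and a direct check (by their $x^3$-dependence and the absence of $H$ in the image of $\iota_2$) shows they are linearly independent modulo $\mathrm{Im}\,\iota_2$. Therefore $K_2$ is spanned by products of Killing vectors, $H$, and $J$.

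Next, for degree~3, I would apply Algorithm~2 to the pair $(K_1,K_2)$ to obtain $\dim\ker\iota_3=70$, giving $\mathrm{rank}(\iota_3)=6\cdot 22-70=62=\dim K_3$. Hence $\iota_3\colon K_1\otimes K_2\to K_3$ is surjective, so every Killing $3$-tensor is a sum $\sum X_i\cdot Q_i$ with $X_i$ Killing vectors and $Q_i$ Killing $2$-tensors; composing with the degree~2 decomposition above gives the required expression in terms of the listed generators. For degree~4, since no irreducible Killing $3$-tensor exists, the source of $\iota_4$ can be taken to be $S^2K_2$ of dimension $\binom{23}{2}=253$; Algorithm~2 yields $\dim\ker\iota_4=105$, so $\mathrm{rank}(\iota_4)=253-105=148=\dim K_4$, i.e.\ $\iota_4$ is surjective. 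Again composing with the decomposition of $K_2$ presents each Killing $4$-tensor as a polynomial expression in the six Killing vectors, $H$ and $J$.

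The main obstacle is the genericity transfer: the Algorithm~1/2 computations are executed at a rational point $x_0$ for a specific rational choice of $f$, whereas the statement concerns all generic $f$ in $C^{k+1}$. The key justification is that all the rank inequalities used (for $M_k$ and for the Taylor--Poisson systems defining $\ker\iota_d$) are upper semicontinuous in the $(k+1)$-jet of $f$ at $x_0$, so the dimensions $\dim K_d$ can only increase, and $\dim\ker\iota_d$ can only decrease, on a closed subset of the parameter space; hence $\iota_3$ and $\iota_4$ remain surjective on the complementary open dense set. This semicontinuity, together with the fact that metrics (ii) and (iii) already realize the generic dimensions $(6,22,62,148)$, closes the argument; no additional symbolic computation in the functional parameter $f$ is needed.
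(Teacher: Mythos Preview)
Your overall strategy---short exact sequences plus a semicontinuity argument to pass from the specific computations for metrics (ii), (iii) to generic $f$---is exactly the paper's approach. However, the genericity transfer in your final paragraph contains a genuine error and a gap.

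First, the direction of semicontinuity for $\dim\ker\iota_d$ is stated backwards. The kernel of a continuously varying linear map is \emph{upper} semicontinuous: the set $\{f:\dim\ker\iota_d\geq c\}$ is closed, so the kernel dimension can only \emph{increase} on a closed subset, not decrease. With your stated direction (generic $\dim\ker\iota_3\geq70$) one gets $\mathrm{rank}(\iota_3)\leq62$, which together with $\dim K_3\leq62$ does \emph{not} yield surjectivity. With the correct direction (generic $\dim\ker\iota_3\leq70$) one obtains $\mathrm{rank}(\iota_3)\geq132-70=62\geq\dim K_3$, and surjectivity follows.

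Second, even with the correction, you need the source spaces $K_1\otimes K_2$ and $S^2K_2$ to have dimensions \emph{exactly} $132$ and $253$ for generic $f$, hence $\dim K_2=22$ exactly. Upper semicontinuity of $\dim K_2$ from the computation for metrics (ii), (iii) gives only $\dim K_2\leq22$; you never supply the matching lower bound. The paper closes this by exhibiting, for every generic $f$, the explicit syzygy $\mathfrak{S}_2:\,I_1I_2+I_3I_6-I_4I_5=0$ among the Killing vectors \eqref{KVppw} (verified via constancy of the Wronskian of $a_1,a_2$), so that $\mathrm{Im}(\iota_2)$ has dimension exactly $20$; together with the explicit $H$ and $J$ this forces $\dim K_2=22$. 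The paper then realizes the $70$ syzygies for $\iota_3$ as the six products $I_j\cdot\mathfrak{S}_2$ together with the $64$-dimensional kernel of the symmetrization $K_1\otimes S^2K_1/(K_1\otimes\mathfrak{S}_2)\to S^3K_1$, and similarly for $\iota_4$. An equivalent route the paper mentions is the functional-rank observation: the eight functions $I_1,\dots,I_6,H,J$ have functional rank $7$ while $I_1,\dots,I_6$ alone have rank $5$, so no syzygy can involve $H$ or $J$ and the only quadratic syzygy among the $I_j$ is $\mathfrak{S}_2$. Either of these explicit realizations is what your purely semicontinuity-based argument is missing.
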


Here $f$ is generic in $C^{13}$ topology for Killing 3-tensors and in $C^{19}$ topology for Killing 4-tensors,
see Table 1 for $k=k_d$, however we believe that also holds in lower regularity by the approach of \cite{KM2}.
   
  \begin{proof}
It follows from our computations and algebraic dependence of the matrix $M_k$ on $j^{k+1}f$ that
$\dim K_i$ ($i=2,3,4$) is upper semi-continuous in this jet. Hence, for a generic $f(x^3)$ 
the dimension of $K_2,K_3,K_4$ are as indicated in the third term of the above short exact 
sequences. Due to full control of $K_1,K_2$ the second terms have dimensions as indicated. 
Dimension of the first term
is also upper semi-continuous, so for a generic $f(x^3)$ we have at most the indicated number of 
syzygies. In fact, this number is realizable as follows. 

In the case of Killing 2-tensor (first short exact sequence) the only syzygy is
(verifying this exploits constancy of the Wronskian of $a_1,a_2$)
 $$
\mathfrak{S}_2:\quad I_1I_2+I_3I_6-I_4I_5=0.
 $$
For Killing 3-tensor (second short exact sequence) the only 6 syzygies are $I_j\cdot\mathfrak{S}_2$ ($1\leq j\leq6$).
To explain dimension 70 of the first term, note that kernel of the symmetrization operator
$K_1\otimes S^2K_1/K_1\otimes\mathfrak{S}_2\to S^3K_1$ is 64-dimensional.
Similarly one justifies the case of Killing 4-tensor (third short exact sequence).

Actually, we can also obtain the claim from the fact that the functional rank of 8 functions $I_j,H,J$ is 7, while that
of $I_j$ is 5. Thus no syzygies can involve $H,J$ and the only syzygy among 6 Killing vectors $I_i$ is given by
$\mathfrak{S}_2$.
  \end{proof} 

\subsection{Absence of Killing Tensors for the Wils Metric}\label{S33}

The Wils metric is given by
   \begin{equation}\label{Wm}
g = 2x^1dx^3dx^4 -2x^4dx^1dx^3 +\bigl(f(x^3)x^1((x^1)^2+(x^2)^2)-(x^4)^2\bigr)(dx^3)^2  -(dx^1)^2-(dx^2)^2.
   \end{equation}

We apply our prolongation-projection algorithm to the following three cases: 
$f(u)=u^m$, $m=0,1,2$. The results are displayed in the following table.

\begin{table}[h!]
{\rowcolors{1}{red!30!green!30!blue!20!}{}
\begin{tabular}{||c | c | c |  c ||} 
    \hline
    Linear\hspace{0.56cm} & \dots & $\mathcal{E}^{(4)}$ & 
    $\mathcal{E}^{(5)}$  \\ [0.5ex]
    \hline
    $\delta$ & \dots & 1 & \circled{1} \\ 
    \hline
    Quadratic & \dots & $\mathcal{E}^{(5)}$ 
    & $\mathcal{E}^{(6)}$ \\ [0.5ex] 
    \hline
    $\delta$ & \dots & 2 & \circled{2} \\ 
    \hline
    Cubic\hspace{0.64cm} & $\dots$ & $\mathcal{E}^{(7)}$ & 
    $\mathcal{E}^{(8)}$  \\[0.5ex]
    \hline
    $\delta$ & \dots & 2 & \circled{2} \\
    \hline
    Quartic\hspace{0.39cm} & \dots & $\mathcal{E}^{(8)}$ & 
    $\mathcal{E}^{(9)}$  \\[0.5ex] 
    \hline
    $\delta$ & \dots & 3 & \circled{3}  \\ 
    \hline
    Quintic \hspace{0.39cm} & \dots & $\mathcal{E}^{(9)}$ & 
    $\mathcal{E}^{(10)}$  \\[0.5ex] 
    \hline
    $\delta$ & \dots & 3 & \circled{3}  \\ 
    \hline
    Sextic \hspace{0.39cm} & \dots & $\mathcal{E}^{(10)}$ & 
    $\mathcal{E}^{(11)}$  \\[0.5ex] 
    \hline
    $\delta$ & \dots & 4 & \circled{4}  \\ 
    \hline
\end{tabular}}
{\rowcolors{1}{red!30!green!30!blue!20!}{}
\begin{tabular}{||c | c | c |  c ||} 
    \hline
    Linear\hspace{0.56cm} & \dots & $\mathcal{E}^{(5)}$ & 
    $\mathcal{E}^{(6)}$  \\ [0.5ex]
    \hline
    $\delta$ & \dots & 0 & \circled{0} \\ 
    \hline
    Quadratic & \dots & $\mathcal{E}^{(5)}$ 
    & $\mathcal{E}^{(6)}$ \\ [0.5ex] 
    \hline
    $\delta$ & \dots & 1 & \circled{1} \\ 
    \hline
    Cubic\hspace{0.64cm} & $\dots$ & $\mathcal{E}^{(7)}$ & 
    $\mathcal{E}^{(8)}$  \\[0.5ex]
    \hline
    $\delta$ & \dots & 0 & \circled{0} \\
    \hline
    Quartic\hspace{0.39cm} & \dots & $\mathcal{E}^{(8)}$ & 
    $\mathcal{E}^{(9)}$  \\[0.5ex] 
    \hline
    $\delta$ & \dots & 1 & \circled{1}  \\ 
    \hline
    Quintic \hspace{0.39cm} & \dots & $\mathcal{E}^{(9)}$ & 
    $\mathcal{E}^{(10)}$  \\[0.5ex] 
    \hline
    $\delta$ & \dots & 0 & \circled{0}  \\ 
    \hline
    Sextic \hspace{0.39cm} & \dots & $\mathcal{E}^{(10)}$ & 
    $\mathcal{E}^{(11)}$  \\[0.5ex] 
    \hline
    $\delta$ & \dots & 1 & \circled{1}  \\ 
    \hline
\end{tabular}}
{\rowcolors{1}{red!30!green!30!blue!20!}{}
\begin{tabular}{||c | c | c |  c ||} 
    \hline
    Linear\hspace{0.56cm} & \dots & $\mathcal{E}^{(5)}$ & 
    $\mathcal{E}^{(6)}$  \\ [0.5ex]
    \hline
    $\delta$ & \dots & 0 & \circled{0} \\ 
    \hline
    Quadratic & \dots & $\mathcal{E}^{(5)}$ 
    & $\mathcal{E}^{(6)}$ \\ [0.5ex] 
    \hline
    $\delta$ & \dots & 1 & \circled{1} \\ 
    \hline
    Cubic\hspace{0.64cm} & $\dots$ & $\mathcal{E}^{(7)}$ & 
    $\mathcal{E}^{(8)}$  \\[0.5ex]
    \hline
    $\delta$ & \dots & 0 & \circled{0} \\
    \hline
    Quartic\hspace{0.39cm} & \dots & $\mathcal{E}^{(8)}$ & 
    $\mathcal{E}^{(9)}$  \\[0.5ex] 
    \hline
    $\delta$ & \dots & 1 & \circled{1}  \\ 
    \hline
    Quintic \hspace{0.39cm} & \dots & $\mathcal{E}^{(9)}$ & 
    $\mathcal{E}^{(10)}$  \\[0.5ex] 
    \hline
    $\delta$ & \dots & 0 & \circled{0}  \\ 
    \hline
    Sextic \hspace{0.39cm} & \dots & $\mathcal{E}^{(10)}$ & 
    $\mathcal{E}^{(11)}$  \\[0.5ex] 
    \hline
    $\delta$ & \dots & 1 & \circled{1}  \\ 
    \hline
\end{tabular}}
\caption{Metric (i) $f(x^3)=1$; \hspace{0.8cm} Metric (ii) $f(x^3)=x^3$;\hspace{2.1cm} Metric (iii) $f(x^3)=(x^3)^2$.}
\end{table}

 \begin{theorem}
The Wils metric \eqref{Wm} for $f(u)=u^m$, $m=0,1,2$, admits no Killing tensors up to degree 6 except for powers of
the Hamiltonian.
 \end{theorem}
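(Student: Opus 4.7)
The plan is to combine an upper bound on $\dim K_d$ produced by Algorithm 1 with a matching lower bound from explicit Killing tensors, working case by case for $m = 0, 1, 2$ and $d = 1, \dots, 6$.

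First I would execute Algorithm 1 of Section \ref{S24} for each of the three metrics at a regular rational point $x_0$. The tables of Section \ref{S33} record the stabilized values $\delta_k$ for each degree. By the stabilization criterion following Theorem \ref{thm_cartan_prolongation}, once $\delta_k = \delta_{k+1}$ we have $\delta_k = \dim K_d$ at $x_0$, and by upper semi-continuity of the rank of the prolongation matrix this is an upper bound on the generic dimension of $K_d$.

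Next I would exhibit enough explicit generators to match these upper bounds. For $m = 1, 2$, Theorem \ref{Th2} tells us the Wils metric admits no Killing vectors, because neither $u$ nor $u^2$ is of the form $(c_0 + c_1 u + c_2 u^2)^{-2}$. The only available polynomial integrals of degree $\leq 6$ are then the powers $H, H^2, H^3$ of the Hamiltonian, giving $\dim K_d \geq 1$ for $d \in \{2,4,6\}$ and $\geq 0$ for odd $d$, which matches the tabulated entries $0, 1, 0, 1, 0, 1$ exactly. For $m = 0$, Theorem \ref{Th2} furnishes a single Killing vector $V$, and the $\lfloor d/2 \rfloor + 1$ products $H^{i} V^{d - 2i}$ with $0 \leq i \leq \lfloor d/2 \rfloor$ realize the tabulated values $1, 2, 2, 3, 3, 4$. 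Linear independence of these products can be verified by inspecting their top-degree terms in the momentum variables (alternatively, by evaluating on generic cotangent vectors on which $V$ and the conformal factor in $H$ are algebraically independent). In every case the lower bound meets the upper bound, so the theorem follows.

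The main obstacle is computational: for the sextic case we must reach $\mathcal{E}^{(11)}$, producing a very large sparse integer matrix whose exact rank over $\mathbb{Q}$ is required. This is handled by the LinBox rank routines and the combinatorial sparse-matrix generation described in Section \ref{S26}, without which the rank computations would not be feasible. A subsidiary point is regularity of $x_0$, which need not be verified a priori: once the explicit lower bound coincides with the Algorithm 1 upper bound, regularity of $x_0$ is a byproduct of the matching and the theorem is established simultaneously.
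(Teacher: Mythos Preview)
Your proposal is correct and follows essentially the same route as the paper, which simply says the statement ``follows directly from Table 2.'' You have spelled out the two ingredients the paper leaves implicit: that the tabulated $\delta_k$ values give upper bounds for $\dim K_d$, and that explicit products of $H$ (and, for $m=0$, the Killing vector) realize these bounds.

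Two small remarks. First, the appeal to Theorem~\ref{Th2} is unnecessary and slightly out of logical order: for $m=1,2$ the table itself already records $\dim K_1=0$, and for $m=0$ one sees directly that $\partial_{x^3}$ is Killing since the metric coefficients do not involve $x^3$ when $f\equiv 1$. Second, rank is \emph{lower} semi-continuous (so the corank $\delta_k$ is upper semi-continuous); your conclusion that $\delta_k(x_0)\geq\dim K_d$ is correct, but the phrase ``upper semi-continuity of the rank'' should be adjusted. Neither point affects the validity of the argument.
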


This statement follows directly from Table 2.
It also implies that for generic values of the functional parameter $f$ there are no lower degree Killing tensors.
Now we want to be more specific on those exceptional parameters.

   \begin{theorem}\label{WW}
The Wils metric admits Killing vectors if and only if $f$ is of the form
   \begin{equation}\label{Wf}
f(x^3) = (c_0+c_1x^3+c_2(x^3)^2)^{-2}.
   \end{equation}
In this case the Killing vector is unique up to scale and is given by the formula
   \begin{equation}\label{WKV}
X:= (c_0+c_1x^3+c_2(x^3)^2)\,\p_{x^3}-(2c_2x^1+c_1x^4+2c_2x^3 x^4)\ \partial_{x^4}. 
\end{equation}
   \end{theorem}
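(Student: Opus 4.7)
The plan is to apply Algorithm 1 to the Killing vector equation $\mathcal{E}_1\subset J^1(N,TN)$ for the Wils metric \eqref{Wm} with $f=f(x^3)$ kept symbolic. Writing $X=X^i\p_{x^i}$, the system $\mathcal{L}_X g=0$ consists of 10 scalar first-order PDEs, and several are immediately tractable: the $(2,2)$ component gives $\p_2 X^2=0$, the $(4,4)$ component yields $\p_4 X^3=0$ on the regular locus $x^1\neq 0$, and the $(1,4), (2,4)$ components express $\p_4 X^1, \p_4 X^2$ in terms of $\p_1 X^3, \p_2 X^3$. Combining these with one prolongation step and with the $(1,1), (1,2)$ equations forces $X^3$ to depend on $x^3$ alone and $X^1, X^2, X^4$ to be affine in $(x^1,x^2,x^4)$ with coefficients depending only on $x^3$.

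Substituting the reduced ansatz into the remaining Killing equations and splitting by polynomial degree in $(x^1,x^2,x^4)$ yields a closed linear ODE system on a handful of functions of $x^3$, with coefficients polynomial in $f, f', f''$. This ODE system is generically inconsistent, but becomes compatible precisely when one scalar differential condition on $f$ holds. The key technical step is to show that this condition is equivalent to $(f^{-1/2})'''=0$; setting $h=f^{-1/2}$, this integrates to $h=c_0+c_1 x^3+c_2(x^3)^2$, giving \eqref{Wf}. The main obstacle is the bookkeeping at this stage: the $(3,3)$ component of $\mathcal{L}_X g = 0$ depends on $f, f'$ together with derivatives of the surviving coefficients, and other mixed components contribute overlapping constraints which must be combined carefully so as not to divide by factors vanishing on singular subsets (e.g.\ $x^1=0$).

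For the converse, fix $f=h^{-2}$ with $h$ a quadratic polynomial in $x^3$ and solve the reduced compatible ODE system. The solution space is one-dimensional and integrates, modulo trivial gauge, to $X^3=h$, $X^4=-(h''x^1+h'x^4)$, $X^1=X^2=0$. Since $h''=2c_2$ and $h'=c_1+2c_2 x^3$, this is exactly \eqref{WKV}. A direct check $\mathcal{L}_X g=0$ with this explicit $X$ and $f=h^{-2}$ confirms that the candidate is indeed a Killing vector, and uniqueness up to scale follows from the one-dimensionality of the kernel of the compatible reduced system. This simultaneously gives the ``only if'' and ``if'' directions of the theorem.
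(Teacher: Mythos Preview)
Your strategy is sound and reaches the correct endpoint, but the organization differs from the paper's proof. The paper follows Algorithm~1 almost literally: it evaluates at a \emph{partially} numerical point $(x^1,x^2,x^4)=(1,2,4)$ while keeping $x^3$ symbolic, then tracks the free jet variables through successive prolongations $\mathcal{E}^{(1)},\dots,\mathcal{E}^{(5)}$. At each stage it records which jets become determined, and the last compatibility condition at $\mathcal{E}^{(5)}$ is the third-order ODE
\[
f''' = \frac{18ff'f''-15(f')^3}{4f^2},
\]
whose general solution is \eqref{Wf}. Your route instead keeps all four coordinates symbolic and exploits the polynomial structure of $g$ in $(x^1,x^2,x^4)$: reduce the Killing equations component by component, show the unknowns collapse to a finite set of functions of $x^3$ alone, then split by monomials in $(x^1,x^2,x^4)$ to obtain a closed ODE system. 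Your compatibility condition $(f^{-1/2})'''=0$ is exactly the substitution $h=f^{-1/2}$ that linearizes the paper's ODE, and is a cleaner way to state it.

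What each approach buys: the paper's point-evaluation method is mechanical and plugs directly into their rank-counting algorithm, so it scales to the higher-degree cases treated later; your direct method is more transparent and yields the integrated form of the ODE without a separate solving step. One caution: your claim that a single prolongation forces $X^3=X^3(x^3)$ and $X^1,X^2,X^4$ affine in $(x^1,x^2,x^4)$ is somewhat compressed---the paper needs several prolongation levels before all cross-derivative constraints bite, and in your setup the analogue is that the polynomial-degree splitting must be iterated (e.g.\ eliminating $\partial_1X^3,\partial_2X^3$ requires differentiating and recombining several components, not just reading off one equation). The outline is correct, but in a full write-up that step would need more detail than ``one prolongation step.''
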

   
   \begin{proof}
In order to simplify the calculations we evaluate at $x^1=1, x^2=2, x^4=4$ but leave $x^3$ general. 

\textit{Step 1 and 2.)} Using, the equations defining the PDE $\mathcal{E}$, we 
express the $1$-jets $a^1_1$, $a^1_2$, $a^1_3$, $a^1_4$, $a^2_3$, $a^2_4$, 
$a^3_2$, $a^3_4$, $a^4_1$, $a^4_2$ in terms of the free variables $a^1$, $a^2$, 
$a^3$, $a^4$, $a^2_1$, $a^2_2$, $a^3_1$, $a^3_3$, $a^4_3$, $a^4_4$ and the function $f(x^3)$. 

\textit{Step 3.)} For the first prolongation $\mathcal{E}^{(1)}$, we can express
all 2-jets in terms of lower order jets without making any assumptions on $f$. 

\textit{Step 4.)} Consider $\mathcal{E}^{(2)}$. If we assume that $f \neq 0$, we
obtain the following compatibility conditions:
 $$
a^3_1=0,\ a^3_3=-\frac{a^1f+f'a^3}{2f},\ a^4_4=0.
 $$
We are left with 7 free jet variables. For $\mathcal{E}^{(3)}$, we obtain the 
additional compatibility conditions:
 $$
a^1=0,\ a^2_2=a^4_3,\ a^2_1=\frac{2a^2f^2-4a^3ff'+2a^3ff''-3a^3(f')^2}{6f^2}. 
 $$
We are left with 4 free variables. The prolongation $\mathcal{E}^{(4)}$ gives 
three additional compatibility conditions: $a^4_{3} =0$ and two expressions for 
$a^2$ and $a^4$. Only 1 free variable $a^3$ remains, and the next prolongation $\mathcal{E}^{(5)}$ 
does not give an additional compatibility condition if and only if $f$ is a solution of the ODE
  \begin{equation}\label{eqn_ODE_Wils}
f''' = \frac{18ff'f''-15(f')^3}{4f^2}. 
  \end{equation}
Resolving this ODE gives the required formula \eqref{Wf}. Expression \eqref{WKV} 
follows.
   \end{proof}

The following theorem is proven in the same manner, but the number of steps is 
larger, so the proof is omitted.

   \begin{theorem}
The Wils metric has Killing 2-tensors if and only if it has nontrivial Killing vectors. This happens only for the functional
parameter \eqref{Wf}; in this case, denoting $I_1=\langle X,p\rangle$ the linear integral corresponding to \eqref{WKV}, 
the general quadratic integral is a linear combination $k_1I_1^2+k_2H$. 
   \end{theorem}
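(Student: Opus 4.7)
The plan is to run Algorithm 1 for $d=2$ on the Wils metric \eqref{Wm} in direct parallel with the proof of Theorem \ref{WW}, tracking the prolongation level at which a differential condition on $f$ first becomes unavoidable for a free jet-variable beyond that of the Hamiltonian $H$ to survive. To tame the symbolic complexity I would again evaluate at fixed numerical values such as $x^1=1$, $x^2=2$, $x^4=4$ while keeping $x^3$ as a parameter, so that any residual compatibility condition on the free jets appears as a differential relation in $f(x^3)$.

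First, solve the zeroth-order equations of $\mathcal{E}_2$ to eliminate as many 1-jets of the ten unknown components $a^{ij}(x)$ as possible in terms of a smaller set of free variables. Then prolong step by step: at each prolongation $\mathcal{E}_2^{(k)}$ the new compatibility conditions that do not restrict $f$ consume free jet-variables one by one. The quadratic rows of Table 2 confirm that this process stabilizes at $\delta=1$ for $f(x^3)=x^3$ and $f(x^3)=(x^3)^2$ and at $\delta=2$ for $f(x^3)=1$, consistent with Theorem \ref{WW} which provides a Killing vector only for the special form \eqref{Wf}. Hence generically $\dim K_2 = 1$ and the task reduces to pinning down when this dimension jumps to~$2$.

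Next, at the prolongation level where the count would otherwise stabilize, the survival of an additional free jet forces a single ODE on $f$. I expect this ODE to coincide with (or at least to have the same general solution as) \eqref{eqn_ODE_Wils}, since by Theorem \ref{WW} the form \eqref{Wf} is already precisely the condition for a Killing vector to exist; an additional nontrivial constraint on $f$ would contradict the $(\Leftarrow)$ direction, which is immediate from $I_1^2 \in K_2$. Thus the admissible $f$ are exactly those of the form \eqref{Wf}. For any such $f$, the Killing vector $X$ of \eqref{WKV} yields $I_1=\langle X,p\rangle\in K_1$, so $I_1^2\in K_2$; and since $I_1$ is linear and nontrivial while $H$ is non-degenerate quadratic, $I_1^2$ and $H$ are linearly independent. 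Combined with the upper bound $\dim K_2\leq 2$ from the prolongation count, this gives $K_2=\langle H, I_1^2\rangle$, proving the final assertion.

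The main obstacle is computational rather than conceptual: the quadratic Killing PDE on a 4-manifold has $10$ unknown components versus $4$ in the linear setting of Theorem \ref{WW}, and several more prolongations are required (cf.\ Table 2), so the compatibility conditions involve higher derivatives $f^{(j)}$ and become algebraically intricate. Disentangling genuinely new constraints on $f$ from differential consequences of earlier ones requires careful bookkeeping, and LinBox-based sparse rank computations as described in \S\ref{S26} are essential to keep the computation feasible. By contrast, once the ODE on $f$ is isolated, matching it to \eqref{eqn_ODE_Wils} and exhibiting $I_1^2$ as the extra generator of $K_2$ is routine.
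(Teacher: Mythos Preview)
Your proposal is correct and matches the paper's approach exactly: the paper does not give a detailed proof of this theorem, stating only that it ``is proven in the same manner [as Theorem~\ref{WW}], but the number of steps is larger, so the proof is omitted.'' Your outline is in fact more explicit than what the paper provides, and your identification of the key computational step---isolating the compatibility ODE on $f$ at the stabilization level and verifying it agrees with \eqref{eqn_ODE_Wils}---is precisely what the omitted argument requires.

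One small caution on the logic: your heuristic that the $d=2$ ODE ``should'' coincide with \eqref{eqn_ODE_Wils} because a stricter constraint would contradict the $(\Leftarrow)$ direction establishes only that the $d=2$ condition is \emph{no more restrictive} than \eqref{Wf}; it does not rule out a priori that some $f$ outside the family \eqref{Wf} admits an irreducible Killing $2$-tensor. That exclusion genuinely requires carrying out the symbolic prolongation and reading off the ODE, as you correctly acknowledge in your final paragraph. Similarly, the upper bound $\dim K_2\le 2$ for \emph{all} $f$ of the form \eqref{Wf} (not just generic ones) needs the symbolic computation or a case-by-case check over the normal forms listed in \S\ref{S34}, rather than upper semi-continuity alone.
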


\subsection{General Koutras-McIntosh metrics}\label{S34}

Investigation of the general metric \eqref{KMm} follows the same scheme.
First of all, the computation in the previous section implies that the matrix $M_k$ of the prolonged
Killing PDE 
for degree $d\leq6$ tensors has minimal possible value for $\delta_k$, i.e., 0 for odd $d$ and $1$ 
for even $d$. This implies Theorem \ref{Th1}.

To obtains Theorems \ref{Th2} and \ref{Th3} we can perform general computation with symbolic matrix
for the prolongation $\mathcal{E}^{(6)}$ when $d=1$ and $\mathcal{E}^{(7)}$ when $d=2$.
The matrix $M_k$ has size $1260\times840$ for $d=1$ and $4200\times3300$ for $d=2$.
To compute its rank we use the idea exploited in \cite{KV}, namely successively identifying rows or columns 
with few non-zero terms (this means $\leq2$ for $d=1$ and $\leq8$ for $d=2$) and doing Gauss 
elimination, while storing the involved factors to check their vanishing separately. 
This gives the splitting $a=0$ or $b=0$ and the rest follows from Theorem \ref{WW}.
In fact, this computation also yields equation \eqref{eqn_ODE_Wils}.

The exceptional functional parameters $f(u)$ in \eqref{Wf} up to transformations $u\to ku+b$
(change of coordinates: $x^1\mapsto\lambda x^1$, $x^2\mapsto\lambda x^2$, $x^3\mapsto kx^3+b$, 
$x^4\mapsto\lambda x^4/k$, $g\mapsto\lambda^2 g$, $f\mapsto f/(\lambda k^2)$)
give the following different cases
 $$
f(u)=1,\quad  f(u)=u^{-1},\quad  f(u)=c u^{-2},\quad  f(u)=u^{-4},\quad f(u)=|u^2\pm1|^{-2}.
 $$
In each of these cases one can directly verify there are no irreducible Killing 3- or 4-tensors
(for the middle case this was only verified for a generic parameter $c$),
 i.e.\ all of them are algebraic combinations of $I_1$ and $H$.

\section{Outlook}

In this paper we obtain the nonexistence of Killing tensors of degrees $d$ up to 6 for the Koutras-McIntosh spacetimes
for generic parameters. 
This complements the previous result on the nonexistence of Killing vectors \cite{KoMc}.
The problem of existence of higher order $d>6$ Killing tensors remains open. 
The size of the involved matrices ($163800 \times 152880$ for $d=6$) 
does not allow further computational progress, and we have to stress that our success for metrics \eqref{KMm}
is related to sparsity of the corresponding matrices $M_k$ and rationality of their entries in coordinates and parameters.

The complexity of computations carried here is much higher than that in
preceeding works \cite{KM1,KV,V}; actually those possessed Killing vectors allowing to reduce
the PDE setup to that on a 2-dimensional manifold, while our setup here is fully 4-dimensional
(that is why the size of the matrix $M_k$ of $\mathcal{E}_d^{(k)}$ grows much faster).
Other works \cite{HM,KT,Ki}, addressing Killing 2-tensors, have in similar vein reductions to ODEs
(that is, differential equations on a 1-dimensional manifold), so our work on higher degree $d$ 
Killing tensors is apparently novel.

One may envision that the following approach is feasible for large $d$. Consider the Killing PDE $\mathcal{E}_d$
with $\op{Sol}(\mathcal{E}_d)=K_d$. This is an overdetermined system and a compatibility analysis
gives the dimension of $K_d$ depending on certain rank invariants. Those depend on vanishing of some relative 
invariants. Since the construction involves only invariant algebraic operations and all absolute polynomial invariants vanish, 
there are only few possibilities and the answer for higher $d$ might be the same as that for $d=1$. 
This is indeed confirmed by what we have investigated. 

The nonexistence of polynomial integrals of low degree raises the question whether the geodesic flow of 
metrics \eqref{KMm} is integrable. Depending on the class of admissible integrals the methods to approach
this problem are: differential Galois theory, Painlev\'{e} test, numerical simulations. None of these 
have been done yet.

\bigskip

{\bf Acknowledgment.} 
The authors thank Simon King from the University of Jena for the crucial suggestion to use the LinBox package. 
BK thanks Vladimir Matveev for useful discussion and collaboration within RCN-DAAD project 2020-2021
``Differential-Geometric Structures: Invariants and Integrals''.
WS thanks his fellow student Alessandro Schena for help regarding the implementation of the 
`Exploiting Sparsity' point in Maple.

The research leading to our results has received funding from 
the Norwegian Financial Mechanism 2014-2021 (project registration number 2019/34/H/ST1/00636), 
the Polish National Science Centre (NCN grant number 2018/29/B/ST1/02583), 
and the Tromsø Research Foundation (project “Pure Mathematics in Norway”).
This work is an extension and elaboration of \cite{S}.

\end{document}